\documentclass[a4paper,english,10pt]{article}

\usepackage[utf8]{inputenc}
\usepackage[a4paper]{geometry}
\usepackage{cite}
\usepackage{enumerate,enumitem,hyperref,nameref}
\usepackage{verbatim}
\usepackage{amsmath,amsthm,amsfonts,amssymb,stmaryrd,mathtools,esvect}
\usepackage{placeins}
\usepackage{tikz}
\usetikzlibrary{calc,fit,shapes,positioning,arrows,intersections}

%% theorem, definition et similia %%%%%%%%%%%%%%%%%%%%%%%%%%%%%%%%%%%%%
\theoremstyle{plain}\newtheorem{theorem}{Theorem}
\theoremstyle{plain}
\theoremstyle{plain}\newtheorem{proposition}[theorem]{Proposition}
\theoremstyle{plain}
\theoremstyle{plain}\newtheorem{definition}{Definition}
\theoremstyle{plain}
\theoremstyle{plain}

% refs to description items
\makeatletter
\let\orgdescriptionlabel\descriptionlabel
\renewcommand*{\descriptionlabel}[1]{%
  \let\orglabel\label
  \let\label\@gobble
  \phantomsection
  \edef\@currentlabel{\ignorespaces #1\unskip}%
  \let\label\orglabel
  \orgdescriptionlabel{#1}%
}
\makeatother

\newcommand{\defeq}{\triangleq}
\newcommand{\defiff}{\stackrel{\triangle}{\iff}}

\newcommand{\mono}{\rightarrowtail}

\newcommand{\emb}{\hookrightarrow}
\newcommand{\face}[1]{\langle #1 \rangle}

\newcommand{\rng}{\textsf{\textit{img}}}
\newcommand{\prnt}{\textsf{\textit{prnt}}}
\newcommand{\ctrl}{\textsf{\textit{ctrl}}}
\newcommand{\link}{\textsf{\textit{link}}}
\newcommand{\ephi}[1]{\phi^\mathsf{#1}}
\newcommand\restr[2]{{\left.\kern-\nulldelimiterspace#1\vphantom{\big|}\right|_{#2}}}

\hypersetup{
	colorlinks=true,
	linkcolor=black,
	citecolor=black,
	filecolor=black,
	urlcolor=black,
	pdftitle={A CSP implementation of the bigraph embedding problem},
	pdfauthor={Marino Miculan, Marco  Peressotti},
	pdfsubject={Concurrency Theory},
	pdfkeywords={Bigraphical reactive systems, Constraint programming}
}

\title{A CSP implementation of the bigraph embedding problem}
\author{
	\begin{tabular}{ccc}
	Marino Miculan&\qquad& Marco  Peressotti\\
	\small\href{mailto:marino.miculan@uniud.it}{\tt marino.miculan@uniud.it}
	&\qquad&
	\small\href{mailto:marco.peressotti@uniud.it}{\tt marco.peressotti@uniud.it}
	\end{tabular}\\
	\small	Laboratory of Models and Applications of Distributed Systems \\[-.8ex]
	\small	Department of Mathematics and Computer Science\\[-.8ex]
	\small	University of Udine, Italy\\
}
\date{}

\begin{document}

\maketitle

\begin{abstract}
  A crucial problem for many results and tools about bigraphs and bigraphical reactive systems is bigraph embedding.
  An embedding is more informative than a bigraph matching, since it keeps track of the correspondence between the various components of the redex (guest) within the agent (host). 
  In this paper, we present an algorithm for computing embeddings based on a reduction to a
  \emph{constraint satisfaction} problem. This algorithm, that we prove to be
  sound and complete, has been successfully implemented in LibBig, a library for manipulating bigraphical reactive systems. This library can be used for implementing a wide range of tools,
  and it can be adapted to various extensions of bigraphs.
\end{abstract}

\section{Introduction}
\emph{Bigraphical Reactive Systems} (BRSs)
\cite{jm:popl03,milner:bigraphbook} are a flexible and
expressive meta-model for ubiquitous computation.
System states are represented by \emph{bigraphs}, which are
compositional data structures describing at once both the locations
and the logical connections of (possibly nested) components of a
system.
Like graph rewriting \cite{graphtransformation}, the dynamic
behaviour of a system is defined by a set of \emph{(parametric)
  reaction rules}, which can modify a bigraph by replacing a
\emph{redex} with a \emph{reactum}, possibly changing agents'
positions and connections.

BRSs have been successfully applied to the formalization of a broad
variety of domain-specific calculi and models, from traditional
programming languages to process calculi for concurrency and mobility,
from context-aware systems to web-service orchestration languages,
from business processes to systems biology; a non exhaustive list is
\cite{bdehn:fossacs06,bghhn:coord08,bgm:biobig,dhk:fcm,mmp:dais14,mp:br-tr13}.
Very recently bigraphs have been used in structure-aware agent-based
computing for modelling the structure of the (physical) world where the
agents operates (e.g., drones, robots, etc.) \cite{pkss:bigactors}.

Beside their normative and expressive power, BRSs are appealing
because they provide a range of interesting general results and tools,
which can be readily instantiated with the specific model under
scrutiny: simulation tools, systematic construction of compositional
bisimulations \cite{jm:popl03}, graphical editors \cite{fph:gcm12},
general model checkers \cite{pdh:sac12}, modular composition
\cite{pdh:refine11}, stochastic extensions \cite{kmt:mfps08}, etc.

\looseness=-1
In this paper, we give an implementation for a crucial problem that
virtually all these tools have to deal with, i.e., the matching a
bigraph inside an agent. Roughly, this can be stated as follows: given
$R$ and $A$, we have to find (all, or some) $C, D$ such that $A =
C\circ R \circ D$.  Clearly this is required by any simulation tool
(in order to apply a reaction rule, we have to match the redex inside
the agent, and then replace it with the reactum), but also in other
tools, e.g., for implementing ``find\&replace'' in graphical editors,
for occurrence checks in sortings \cite{bg:calco09} and model
checkers, for refinements in architectural design tools, etc.

Like the similar and well-known \emph{subgraph isomorphism} problem,
bigraph matching is NP-complete (see \cite{bmr:tgc14}). 
However, using the theory of
Fixed Parameter Tractability it can be shown that the exponential
explosion depends only on the size (more precisely, the width) of the
redex to be found, and not on the size of the agent. In most practical
cases, this width is constant and small (e.g.~$\leq 3$), hence the
problem becomes feasible.

A (rather \emph{ad-hoc}) implementation of bigraph matching has been given
in the \emph{BPLTool} \cite{gdbm13:indmatch}; this was based on a
term-based representation of agents and rules, in the spirit of term
rewriting systems.  More recently, a more graphical-oriented approach
has been preferred.  H{\o}jsgaard have introduced the notion of
\emph{bigraph embedding} \cite{hoesgaard:thesis}, which is a function
from nodes and edges of the redex to nodes and edges of the agent,
describing how the former is embedded in the latter.  Although
embeddings and matchings are basically equivalent, embeddings turn out
to be more useful especially in connection with Gillespie-like
algorithms for stochastic simulations \cite{hoesgaard:thesis}, because
they allow for a simpler calculation of interference between redexes.
In fact, embeddings are at the core of the \emph{Bigraphic Abstract
  Machine} \cite{perrone:thesis}, a general abstract machine for
implementing various kinds of BRSs, with several possible execution
strategies.

For these reasons, in this paper we focus on the \emph{bigraph
  embedding problem}.  More precisely, we translate the embedding
problem to a \emph{constraint satisfaction problem} (CSP), whose
solutions correspond to bigraph embeddings. Instead of defining
directly a CSP for the bigraph embedding problem we take advantage of
bigraphs being the ``merge'' of two graphical structures (called 
\emph{link graphs} and \emph{place graphs} respectively): initially we
define the encoding for embeddings of these two orthogonal structures
separately and then combine them by means of some consistency constraints
reflecting the interplay between link and place structures. This split
mimics the peculiar structure of bigraphs and allows us to factor the exposition
of the problem, its encoding and the accompanying adequacy results.
An implementation based on the
\emph{CHOCO} %\footnote{\url{http://www.emn.fr/z-info/choco-solver/}}
solver is available in \emph{LibBig} (available at \href{https://github.com/bigraphs/jlibbig}{\tt https://github.com/bigraphs/jlibbig}),
an extensible library for manipulating bigraphical reactive systems.
However, this is an implementation choice mainly due to the use of Java, but
the results of this paper can be implemented in any solver capable of
handling the integer solutions of a linear equation system.

We do not provide an exhaustive discussion of experimental results
because the encoding proposed is ``solver-independent'' and moreover,
because there is no widely-acknowledged benchmark suit for this problem.
In fact, finding a reasonably representative set of instances still
is an open question.

\paragraph{Synopsis}
In Section~\ref{sec:brs} we briefly recall the notion of bigraphs and
bigraphical reactive systems and in Section \ref{sec:emb} present the
bigraph embedding problem, its complexity and how it can be divided in
two sub-problems, by taking advantage of the components of
bigraphs. The implementation of the bigraph embedding problem as a 
constraint satisfaction problem and the adequacy results 
are presented in Sections~\ref{sec:csp}.
Conclusions, with some experimental evaluations, and some directions
for future work are in Section~\ref{sec:concl}.

\begin{figure}[t]
  \centering
\begin{tikzpicture}[font=\small]
		\node[label=-90:Bigraph $G:\face{3,Y}\to\face{2,X}$](big) at (0,0){
		\begin{tikzpicture}[remember picture,
			every text node part/.style={align=left},
			dot/.style={circle,fill=black,minimum size=2pt,inner sep=0, outer sep=0},
			region/.style={rectangle, draw=gray,dashed,rounded corners,inner sep=5pt},
			site/.style={rectangle, draw=gray,fill=gray!10,dashed,rounded corners,
				outer sep=2pt,inner sep=5pt,minimum size=20pt},
			lbl/.style={inner sep=0,outer sep=2pt},
			nam/.style={inner sep=0,outer sep=2pt},
			ctrA/.style={ellipse,draw=black,solid,inner sep=5pt,outer sep=0},
			ctrB/.style={rectangle,draw=black,solid,inner sep=5pt,outer sep=0},
			link/.style={draw=black!30!green,thick},
			anc/.style={draw,circle,fill,black!30!green,minimum size=0,
				outer sep=0, inner sep=0}]
			\node[region](r0) at (0,0) {0\\
			    \begin{tikzpicture}
					\node[ctrB] (c0) at (0,0){
						\begin{tikzpicture}
			    			\node[site] (s1) at (0,0) {1};
							\node[ctrA, right=5pt of s1] (c2){};
							\node[lbl, right=0 of c2] {\(s\)};
							\node[dot, above right=0 of c2] (p2-0) {};
							\node[dot, below left=0 of c2] (p2-1) {};
			    		\end{tikzpicture}};
					\node[dot, below=0 of c0] (p0-0) {};
					\node[dot, left=0 of c0,yshift=-2pt] (p0-1) {};
					\node[lbl,above=0 of c0] {\(r\)};
			    \end{tikzpicture}
			  };
			\node[region, right=5pt of r0](r1) {1\\
			    \begin{tikzpicture}
			    	\node[site] (s0) at (0,0) {0};
					\node[ctrB, right=5pt of s0] (c1) {
					\begin{tikzpicture}
			    		\node[site] (s2) at (0,0) {2};
			    	\end{tikzpicture}};
					\node[dot, above=0 of c1, xshift=-4pt] (p1-0) {};
					\node[dot, above=0 of c1, xshift=4pt] (p1-1) {};
					\node[lbl,below=0 of c1] {\(t\)};
			    \end{tikzpicture}
			  };
			\node[nam,above=5pt of r0] (n0) {\(x_0\)};
			\node[nam,above=3pt of r1] (n1) {\(x_1\)};
			\node[nam,below=10pt of p0-0] (n2) {\(y_0\)};
			\node[anc, above=16pt of c2, xshift=13pt] (e0) {};
			\draw[link,bend left] (e0)to node {}(n0);
			\draw[link,bend left] (e0)to node {}(p2-0);
			\draw[link,bend left] (e0)to node {}(p1-0);
			\draw[link,bend right=50] (p0-1) to node {}(p2-1);
			\draw[link,bend right] (p1-1)to node {}(n1);
			\draw[link] (p0-0)--(n2);
		\end{tikzpicture}};

		\node[label=90:Place graph $G^P:3\to 2$, 
			left=10pt of big]{
		\begin{tikzpicture}[scale=.65,
			region/.style={rectangle, draw=gray,dashed,rounded corners,
				outer sep=1pt,inner sep=4pt},
			site/.style={rectangle, draw=gray,fill=gray!10,dashed,
				rounded corners,outer sep=1pt,inner sep=4pt},]
			\node[region] (r0) at (-1.5,3) {0};
			\node[region] (r1) at (.5,3) {1};
			\node[site] (s0) at (0,0) {0};
			\node[site] (s1) at (-2,0) {1};
			\node[site] (s2) at (1,0) {2};
			\node[below=10pt of r0] (c0) {r};
			\node[above=10pt of s2] (c1) {t};
			\node (c2) at (-1.1,.7) {s};
			\draw[thick, draw=gray] (r0.south)--(c0.north);
			\draw[thick, draw=gray] (c0)--(s1);
			\draw[thick, draw=gray] (c0)--(c2);
			\draw[thick, draw=gray] (r1)--(c1)--(s2);
			\draw[thick, draw=gray] (r1)--(s0);
		\end{tikzpicture}};
		
		\node[label=90:Link graph $G^L:Y\to X$,
			right=10pt of big](lnk){
		\begin{tikzpicture}[
			every text node part/.style={align=left},
			dot/.style={circle,fill=black,minimum size=2pt,inner sep=0, outer sep=0},
			lbl/.style={inner sep=0,outer sep=2pt},
			nam/.style={inner sep=0,outer sep=2pt},
			ctrA/.style={ellipse,draw=black,solid,inner sep=5pt,outer sep=0},
			ctrB/.style={rectangle,rounded corners,draw=black,solid,minimum size=15pt,inner sep=5pt,outer sep=0},
			link/.style={draw=black!30!green,thick},
			anc/.style={draw,circle,fill,draw=black!30!green,minimum size=0,outer sep=0, inner sep=0}]
			\node[ctrB] (c0) at (0,0) {\(r\)};
			\node[dot, below=0 of c0] (p0-0) {};
			\node[dot, left=0 of c0] (p0-1) {};
			%\node[lbl,right=0 of c0] (l0) {\(r\)};
			\node[ctrA, above left=8pt of c0,inner sep=3pt] (c2){\(s\)};
			%\node[lbl, left=0 of c2] (l2) {\(s\)};
			\node[dot, above=0 of c2] (p2-0) {};
			\node[dot, below=0 of c2] (p2-1) {};
			\node[ctrB, right=16pt of c2] (c1) {\(t\)};
			\node[dot, above=0 of c1, xshift=-3pt] (p1-0) {};
			\node[dot, above=0 of c1, xshift=3pt] (p1-1) {};
			%\node[lbl,right=0 of c1] (l1) {\(t\)};
			\node[nam,above=12pt of c2] (n0) {\(x_0\)};
			\node[nam,above=10pt of c1] (n1) {\(x_1\)};
			\node[nam,below=10pt of c0] (n2) {\(y_0\)};
			\node[anc, above=3pt of c2, xshift=12pt] (e0) {};
			\draw[link,bend left] (e0)to node {}(n0);
			\draw[link,bend left] (e0)to node {}(p2-0);
			\draw[link,bend left] (e0)to node {}(p1-0);
			\draw[link,bend left] (p0-1) to node {}(p2-1);
			\draw[link,bend right] (p1-1)to node {}(n1.south);
			\draw[link] (p0-0)--(n2);
		\end{tikzpicture}};
\end{tikzpicture}
	\caption{Forming a bigraph from a place graph and a link graph.}
	\label{fig:bigraph-comp}
\end{figure}

\section{Bigraphical reactive systems}
\label{sec:brs}

In this section we briefly recall the notion of Bigraphical Reactive
Systems (BRS) referring the interested reader to
\cite{milner:bigraphbook}.

The key point of BRSs is that ``the model should consist in some sort
of reconfigurable space''. Agents may interact in this space, even if
they are spatially separated.  This means that two agents may be
adjacent in two ways: they may be at the same \emph{place}, or they
may be connected by a \emph{link}. This leads to the definition of
\emph{bigraphs} as a data structure for representing the state of the
system. A bigraph can be seen as enriched hyper-graph combining two
independent graphical structures over the same set of \emph{nodes}: a
hierarchy of \emph{places}, and a hyper-graph of \emph{links}.

\begin{definition}[Bigraph {\cite[Def.~2.3]{milner:bigraphbook}}]
A \emph{bigraph graph} $G$ over a given signature $\Sigma$ 
(i.e.~a set of types, called controls, denoting a finite arity)
is a an object 
$$(V_G, E_G, \ctrl_G, \prnt_G, \link_G):\face{n_G,X_G}\to\face{m_G,Y_G}$$
composed by two substructures (Figure~\ref{fig:bigraph-comp}): 
a \emph{place graph}  $G^P=(V_G, \ctrl_G, \prnt_G):{n_G\to m_G}$ and a
\emph{link graph} $G^L=(V_G,E_G,\ctrl_G,\link_G):{X_G\to Y_G}$.
The set $V_G$ is a finite set of nodes and to each of them is assigned a
control in $\Sigma$ by the \emph{control map} $\ctrl_G : V_G\to \Sigma$.
The set $E_G$ is a finite set of names called \emph{edges}.

These structures presents an inner interface (composed by $n_G$ and 
$X_G$) and an outer one ($m_G$, $Y_G$) along which can be composed with
other of their kind as long as they do not share any
node or edge. In particular, $X_G$ and $Y_G$ are finite sets
of names and $n_G$ and $m_G$ are finite ordinals that index
\emph{sites} and \emph{roots} respectively.

On the side of $G^P$, nodes, sites and roots are organized in
a forest described by the \emph{parent map} $\prnt_G : V_G 
\uplus n_G \to V_G \uplus m_G$ such that sites are leaves and 
roots are exactly $m_G$.

On the side of $G^L$, nodes, edges and names of the inner and outer
interface forms a hyper-graph described by the \emph{link map} 
$\link_G : P_G\uplus X_G \to E_G \uplus Y_G$ which is a function from $X_G$ and ports $P_G$ (i.e.~elements of 
the finite ordinal associated to each node by its control) to 
edges $E_G$ and names in $Y_G$.
\end{definition}

\looseness=-1
The dynamic behaviour of a system is described in terms of
\emph{reactions} of the form $a \rightarrowtriangle a'$ where $a,a'$
are agents, i.e.~bigraphs with inner interface $\face{0,\emptyset}$.
Reactions are defined by means of graph rewrite rules, which are pairs
of bigraphs $(R_L, R_R)$ equipped with a function $\eta$ from the 
sites of $R_R$ to those of $R_L$ called \emph{instantiation rule}.
A bigraphical encoding for the open reaction rule of the Ambient 
Calculus is shown in Figure~\ref{fig:amb-open} where redex and 
reactum are the bigraph on the left and the one on the right respectively
and the instantiation rule is drawn in red. A rule fires when its redex 
can be embedded into the agent; then, the matched part is replaced by 
the reactum and the parameters (i.e.~the substructures determined by 
the sites of the redex) are instantiated accordingly with $\eta$.

\begin{figure}[t]
  \centering
  \begin{tikzpicture}[font=\small,remember picture,
      extended/.style={shorten >=-#1, shorten <=-#1},
      extended/.default=8pt,
      extended end/.style={shorten >=-#1},
      extended start/.style={shorten <=-#1},
      every text node part/.style={align=left},
      dot/.style={circle,fill=black,minimum size=2pt,inner sep=0, outer sep=0},
      region/.style={rectangle, draw=gray,dashed,rounded corners,inner sep=5pt},
      site/.style={rectangle, draw=gray,fill=gray!10,dashed,rounded corners,
        outer sep=2pt,inner sep=5pt,minimum size=20pt},
      lbl/.style={inner sep=0,outer sep=2pt},
      nam/.style={inner sep=0,outer sep=2pt},
      ctrAmb/.style={ellipse,draw=black,solid,inner sep=1pt,outer sep=0},
      ctrCap/.style={rectangle,draw=black,solid,inner sep=5pt,outer sep=0},
      link/.style={draw,black!30!green,thick},
      anc/.style={draw,circle,black!30!green,minimum size=0,
        outer sep=0, inner sep=0},
      tap/.style={draw,circle,fill,draw,black!30!green,minimum size=1pt,
        outer sep=0, inner sep=0}]
      \node(redex) at (0,0){
        \begin{tikzpicture}
          \node[region,inner sep=8pt](root) at (0,0) {
            \begin{tikzpicture}
            \node[ctrCap] (c0) at (0,0){
              \begin{tikzpicture}
                  \node[site] (s0) at (0,0) {0};
                \end{tikzpicture}};
            \node[dot,above=0 of c0] (c0-0) {};
            \node[lbl,below=0 of c0] {\(open\)};
            \node[ctrAmb, right=15pt of c0] (c1){
              \begin{tikzpicture}
                  \node[site] (s1) at (0,0) {1};
                \end{tikzpicture}};
            \node[dot,above=0 of c1] (c1-0) {};
            \node[lbl,below=0 of c1] {\(amb\)};
            \end{tikzpicture}
          };
          \node[nam,above=5pt of root] (x) {\(x\)};
          \draw[link,out=-90,in=60] (x) to node {}(c0-0);
          \draw[link,out=-90,in=120] (x) to node {}(c1-0);
        \end{tikzpicture}};
      \node[right=20pt of redex,yshift=3pt] (reactum)  {
        \begin{tikzpicture}
          \node[region, inner sep=10pt](root) at (0,0) {
            \begin{tikzpicture}
              \node[site] at (0,0) (s2){0};
              \node[site,right=5pt of s2] (s3){1};
            \end{tikzpicture}};
          \node[nam,above=5pt of root] (x) {\(x\)};
          \node[tap]  (x-tap) [below=11pt of x,xshift=2pt] {};
          \draw[link] (x.south) to[bend right] (x-tap);
        \end{tikzpicture}};
      \draw[thick, draw=red, extended, ->] (s2) to[out=200,in=-20] (s0);
      \draw[thick, draw=red, extended, ->] (s3) to[out=200,in=-20] (s1);
      \draw[thick, -open triangle 45] (redex)--(reactum);
      \node[lbl] (caption) at ($(redex.south)!.5!(reactum.south)+(0pt,-20pt)$)
        {\(\mathsf{open}_x.\fbox{\scriptsize 0}\parallel
           \mathsf{amb}_x.\fbox{\scriptsize 1}\rightarrowtriangle
           \fbox{\scriptsize 0}\parallel\fbox{\scriptsize 1}\)};
  \end{tikzpicture}
  \caption{Open reaction rule of the Ambient Calculus.}
  \label{fig:amb-open}
\end{figure}

\section{Bigraph embeddings}
\label{sec:emb}

In this Section we briefly recall the notion of \emph{bigraph embedding}.
The following definitions are taken from \cite{hoesgaard:thesis},
with minor modification to simplify the presentation of the 
equivalent CSP formulation (cf.~Section~\ref{sec:csp}).
As usual, we shall exploit the orthogonality of the link and place graphs, 
by defining \emph{link and place graph embeddings} separately and then 
combine them to extend the notion to bigraph.

\paragraph{Link graph}
Intuitively an embedding of link graphs is a structure preserving map
from one link graph (the \emph{guest}) to another (the \emph{host}). 
As one would expect from a graph 
embedding, this map contains a pair of injections: one for the nodes
and one for the edges (i.e., a support translation). The remaining
of the embedding map specifies how names of the inner and outer interfaces should be mapped into the host link graph. Outer names can be mapped to any link; here injectivity is not required since a context can alias outer names. Dually, inner names can mapped to hyper-edges linking sets of points in the host link graph and such that every point is contained in at most one of these sets.

\begin{definition}[Link graph embedding {\cite[Def~7.5.1]{hoesgaard:thesis}}]\label{def:lge}
	Let $G : X_G \to Y_G$ and $H : X_H \to Y_H$ be two concrete link graphs. 
	A \emph{link graph embedding} $\phi : G \emb H$ is a map
	$\phi \defeq \ephi v \uplus \ephi e \uplus \ephi i \uplus \ephi o$
	(assigning nodes, edges, inner and outer names respectively)
	subject to the following conditions:
	\begin{description}
	\item[(LGE-1)\label{def:lge-1}]
		$\ephi v : V_G \mono V_H$ and $\ephi e : E_G \mono E_H$ are injective;
	\item[(LGE-2)\label{def:lge-2}]
		$\ephi i : X_G \mono \wp(X_H \uplus P_H)$ is fully
                injective: $\forall x\neq x' : \ephi i(x) \cap \ephi i (x') = \emptyset$;
	\item[(LGE-3)\label{def:lge-3}]
		$\ephi o : Y_G \to E_H \uplus Y_H$ in an arbitrary partial map;
	\item[(LGE-4)\label{def:lge-4}]
		$\rng(\ephi e) \cap \rng(\ephi o) = \emptyset$ and $\rng(\ephi{port})\cap \bigcup\rng(\ephi i)  = \emptyset$;
	\item[(LGE-5)\label{def:lge-5}]
		$\ephi p \circ \restr{\link_G^{-1}}{E_G} = \link_H^{-1}\circ \ephi e$;
	\item[(LGE-6)\label{def:lge-6}]
		$\ctrl_G = \ctrl_H \circ \ephi v$;
	\item[(LGE-7)\label{def:lge-7}]
		$\forall p \in X_G \uplus P_G : \forall p' \in (\ephi p)(p) : (\ephi h \circ \link_G)(p) = \link_h(p')$
	\end{description}
	where 
	$\ephi p \defeq \ephi i \uplus \ephi{port}$,
	$\ephi h \defeq \ephi e \uplus \ephi{o}$ and  
	$\ephi{port}:P_G \mono P_H$ is $\ephi{port}(v,i) \defeq (\ephi v(v),i))$.
\end{definition}

The first three conditions are on the single sub-maps of the 
embedding. Condition \ref{def:lge-4} ensures that no components 
(except for outer names) are identified; condition \ref{def:lge-5}
imposes that points connected by the image of an edge are all 
covered. Finally, conditions \ref{def:lge-6} and \ref{def:lge-7} 
ensure that the guest structure is preserved i.e.~node controls 
and point linkings are preserved.

\paragraph{Place graph}
Like link graph embeddings, place graph embeddings
are just a structure preserving injective map from nodes along with suitable maps for the inner and outer interfaces. In particular, a site is mapped to the set of sites and nodes that are ``put under it'' and a root is mapped to the host root or node that is ``put over it'' splitting the host place graphs in three parts: the guest image, the context and the parameter (which are above and below the guest image).

\begin{definition}[Place graph embedding {\cite[Def~7.5.4]{hoesgaard:thesis}}]\label{def:pge}
	Let $G : n_G \to m_G$ and $H : n_H \to m_H$ be two concrete place graphs. 
	A \emph{place graph embedding} $\phi : G \emb H$ is a map
	$\phi \defeq \ephi v \uplus \ephi s \uplus \ephi r$
	(assigning nodes, sites and regions respectively)
	subject to the following conditions:
	\begin{description}
	\item[(PGE-1)\label{def:pge-1}]
		$\ephi v : V_G \mono V_H$ is injective;
	\item[(PGE-2)\label{def:pge-2}]
		$\ephi s : n_G \mono \wp(n_H \uplus V_H)$ is fully injective;
	\item[(PGE-3)\label{def:pge-3}]
		$\ephi r : m_G \to V_H \uplus m_H$ in an arbitrary map;
	\item[(PGE-4)\label{def:pge-4}]
		$\rng(\ephi v) \cap \rng(\ephi r) = \emptyset$ and $\rng(\ephi v) \cap \bigcup \rng(\ephi s) = \emptyset$;
	\item[(PGE-5)\label{def:pge-5}]
			$\forall r \in m_G : \forall s \in n_G : \prnt_H^*\circ \ephi r(r) \cap \ephi s(s) = \emptyset$;
	\item[(PGE-6)\label{def:pge-6}]
		$\ephi c \circ \restr{\prnt_G^{-1}}{V_G} = \prnt_H^{-1}\circ \ephi v$;
	\item[(PGE-7)\label{def:pge-7}]
				$\ctrl_G = \ctrl_H \circ \ephi v$;
	\item[(PGE-8)\label{def:pge-8}]
				$\forall c \in n_G \uplus V_G : \forall c' \in \ephi c(c) : 
				(\ephi f \circ \prnt_G)(c) = \prnt_H(c')$;
	\end{description}
	where $\prnt_H^*(c) = \bigcup_{i < \omega} \prnt^i(c)$,
	$\ephi f \defeq \ephi v \uplus \ephi{r}$, and
	$\ephi c \defeq \ephi v \uplus \ephi{s}$.
\end{definition}

Conditions in the above definition follows the structure
of Definition~\ref{def:lge}, the main notable difference is
\ref{def:pge-5} which states that the image of a root
can not be the descendant of the image of another. 
Conditions \ref{def:pge-1}, \ref{def:pge-2} and \ref{def:pge-3} are on the three sub-maps composing the
embedding; conditions \ref{def:pge-4} and 
\ref{def:pge-5} ensure that no components are identified;
\ref{def:pge-6} imposes surjectivity on children and the last two conditions
require the guest structure to be preserved by the embedding map.

\paragraph{Bigraph}
Finally, bigraph embeddings can now be defined as maps being composed 
by an embedding for the link graph with one for the place graph 
consistently with the interplay of these two substructures. In 
particular, the interplay is captured by a single additional 
condition ensuring that points in the image of an inner names 
reside in the parameter defined by the place graph embedding 
(i.e.~are inner names or ports of some node under a site image).

\begin{definition}[Bigraph embedding {\cite[Def~7.5.14]{hoesgaard:thesis}}]\label{def:bge}
	Let $G : \face{n_G,X_G} \to \face{m_G,Y_G}$ and 
	$H : \face{n_H,X_H} \to \face{m_H,Y_H}$ be two concrete
	bigraphs. A \emph{bigraph embedding} $\phi : G \emb H$
	is a map given by a place graph embedding 
	$\ephi P : G^P\emb H^P$ and a link graph embedding
	$\ephi L : G^L\emb H^L$ subject to the consistency
	condition:
	\begin{description}
		\item[(BGE-1)\label{def:bge-1}]
			$\rng(\ephi i) \subseteq X_H \uplus 
			\{(v,i) \in P_H \mid \exists s \in n_G : k \in \mathbb{N}:
			\prnt_H^k(v) \in \ephi s(s)\}$.
	\end{description}
\end{definition}

\paragraph{NP-completeness}
Despite their apparent complexity, the conditions maps have to satisfy to be
considered bigraph embeddings may give some information and 
guidance in the construction of these maps.
However the problem remains NP-complete as demonstrated in \cite{bmr:tgc14}.
We recall their results to make this paper self contained.  
The authors focus on labelled forest embedding which covers the case 
of place graphs embeddings but not link graphs.
In Section~\ref{sec:lge-csp} we prove that the link graph 
embedding problem corresponds to an admissibility
problem for a specific flow network. Therefore, the
result presented in \cite{bmr:tgc14} will suffice to justify 
our approach.

\looseness=-1
To prove that the labelled forest pattern is NP-complete,
in \cite[§3]{bmr:tgc14} a reduction from \textsc{3-Sat} is provided.
The proposed reduction uses the \textsc{RainbowAntichain} problem
as a middle step (introduced in \emph{loc.~cit.}). An instance of this problem
is a tree $\mathcal{T(V,E)}$ with nodes $\mathcal V$ and edges
$\mathcal E$, and a finite set of colours $\mathcal P$, said palette.
Some of the nodes in $\mathcal T$ have been coloured with one or more
colours taken from $\mathcal P$. The problem asks to decide whatever
exists a colourful subset of nodes $\mathcal R \subset \mathcal V$ 
where each colour $c$ of $\mathcal P$ has exactly one representative node
coloured with $c$ and for no pair of $u,v \in \mathcal R$ of distinct
nodes $u$ is an ancestor of $v$.

\begin{theorem}[\hspace{-.2ex}{\cite[Th.~8]{bmr:tgc14}}]
  The \textsc{RainbowAntichain} problem is NP-complete.
\end{theorem}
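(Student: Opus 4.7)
The plan is to establish the two standard ingredients of an NP-completeness proof: membership in NP and NP-hardness via reduction from a known NP-complete problem such as \textsc{3-Sat}.

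Membership in NP is straightforward. Given an alleged witness $\mathcal{R} \subseteq \mathcal{V}$, a polynomial-time verifier checks (i) that $|\mathcal{R}| = |\mathcal{P}|$ and that the coloured labels of the nodes in $\mathcal{R}$ form a set equal to $\mathcal{P}$ (so every colour has exactly one representative), and (ii) that $\mathcal{R}$ is an antichain in $\mathcal{T}$, by traversing the tree once and marking the ancestors of each node in $\mathcal{R}$, rejecting if any such ancestor is itself in $\mathcal{R}$. Both checks are clearly polynomial in $|\mathcal{V}| + |\mathcal{P}|$.

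For NP-hardness, I would reduce from \textsc{3-Sat}. Given variables $x_1,\dots,x_n$ and clauses $C_1,\dots,C_m$, the palette is $\mathcal{P} = \{c_1,\dots,c_n\} \cup \{d_1,\dots,d_m\}$, with $c_i$ encoding the assignment commitment for $x_i$ and $d_j$ encoding the satisfaction of clause $C_j$. The tree $\mathcal{T}$ is built as a path of ``gadgets'' stacked under the root, one gadget per variable. The gadget for $x_i$ branches into two siblings representing the choices $x_i=\top$ and $x_i=\bot$; each of these two siblings carries the colour $c_i$, so choosing one of them forces the commitment, and the antichain condition prevents choosing both (since they would clash on $c_i$ anyway, but more importantly the gadgets below have to be picked from a single branch). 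Inside the $x_i=\top$ subtree we hang, as descendants, one node coloured $d_j$ for every clause $C_j$ in which the literal $x_i$ occurs positively; inside the $x_i=\bot$ subtree we hang a node coloured $d_j$ for every clause in which $\lnot x_i$ occurs. The subsequent variable-gadget is attached below all of this, and so on. The intended correspondence is: a satisfying assignment chooses, for each $i$, the appropriate commitment node (yielding the colours $c_i$) together with, for each clause $C_j$, one of the $d_j$-coloured descendants corresponding to a literal that satisfies $C_j$. Because distinct gadgets lie on disjoint branches at the $c_i$ level, and the $d_j$-nodes live strictly below the commitment nodes they depend on, the rainbow antichain condition forces exactly one commitment per variable and exactly one satisfying literal per clause, coherent with that commitment.

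The main obstacle is getting the tree topology exactly right so that (a) the antichain constraint rules out picking the ``true'' and ``false'' commitments for the same variable, (b) it rules out picking a $d_j$-witness from a branch incompatible with the chosen commitment of the variable whose literal it represents, and yet (c) it still permits \emph{independent} selection of one $d_j$-witness per clause once the commitments are fixed. The cleanest way I would carry this out is to make the commitments of later variables not live below the commitments of earlier ones (so that a commitment choice for $x_i$ does not block commitment choices for $x_{i+1}$), which can be arranged by placing the variable-gadgets as siblings under a common ``spine'' node and attaching the clause-witness copies as descendants only of the commitment nodes themselves. Once the geometry is correct, correctness of the reduction follows by verifying both directions: a satisfying assignment yields a rainbow antichain by the construction, and conversely any rainbow antichain must contain exactly one $c_i$-node per variable (determining an assignment) and one $d_j$-node per clause lying under a commitment compatible with that assignment (witnessing satisfaction). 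The reduction is clearly polynomial, completing the proof.
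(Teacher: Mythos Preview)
Your NP-membership argument is fine in spirit, though the check $|\mathcal R|=|\mathcal P|$ is not quite the right test when nodes may carry several colours; this is a minor technicality and easily repaired.

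The reduction, however, has a real gap. You place the $d_j$-witness for a literal $\ell$ \emph{beneath} the commitment node for that same literal, and then argue that a satisfying assignment selects both the commitment node (to supply the colour $c_i$) and one of its descendant $d_j$-witnesses (to supply $d_j$). But that pair is an ancestor--descendant pair, which the antichain condition forbids outright. In fact your layout realises the \emph{opposite} of your desiderata (b) and (c): committing to $x_i=\top$ blocks precisely the clause-witnesses that rely on $x_i$ being true, while leaving available those that rely on $x_i$ being false. No amount of rearranging the variable gadgets as siblings along a spine fixes this, because the conflict is local to a single commitment node and its own subtree.

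The paper's construction resolves this with one crucial twist: the clause-witness for a literal $\ell$ is hung under the \emph{opposite} literal node $\bar\ell$. Selecting the true-literal node $\ell$ then blocks only the witnesses sitting below $\ell$---namely those standing for $\bar\ell$, which should indeed be unavailable---while the witnesses standing for $\ell$ sit below the \emph{unselected} sibling $\bar\ell$ and remain free. With this swap (and your flat layout of all literal nodes as children of the root, which is exactly what the paper does), the two directions of the correctness argument go through as you sketched.
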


It is the straightforward to see that an instance $\mathcal T$, 
$\mathcal P = (c_0,\dots,c_{n-1})$ of \textsc{RainbowAntichain}
can be reduced to a forest pattern matching, namely, one that
embeds the forest $(c_0[0],\dots,c_{n-1}[n-1])$ -- every tree has
 only a node, labelled with a colour of the palette, and a hole/site
-- into $\mathcal T$. This states that the forest pattern matching 
problem is NP-complete. Formally,
\begin{theorem}[\hspace{-.2ex}{\cite[Th.~9]{bmr:tgc14}}]
  The labelled forest embedding problem is NP-complete.
\end{theorem}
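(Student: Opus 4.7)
The plan is to prove NP membership and NP-hardness separately.

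\emph{Membership.} A candidate embedding $\phi = \ephi{v} \uplus \ephi{s} \uplus \ephi{r}$ has a description of size polynomial in $|G|+|H|$, and each of the conditions \ref{def:pge-1}--\ref{def:pge-8} can be checked in polynomial time by scanning the parent maps of $G$ and $H$, verifying control equality on the nodes and the relevant set-theoretic disjointness and surjectivity requirements.

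\emph{Hardness.} I would formalise the reduction from \textsc{RainbowAntichain} sketched just before the statement. Given an instance $(\mathcal T, \mathcal P)$ with $\mathcal P = \{c_0,\ldots,c_{n-1}\}$, build the guest forest $G$ with $n$ roots, where root $i$ has a single child node $v_i$ of control $c_i$ whose only child is a site; take as host $H$ the place graph obtained from $\mathcal T$ by treating each node's colour as its control (for nodes bearing several colours, replace them with a short chain of singly-coloured children; the reduction stays polynomial). The translation is clearly logspace computable.

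\emph{Correctness.} Forward: given a rainbow antichain $\mathcal R=\{r_0,\ldots,r_{n-1}\}$ with $r_i$ of colour $c_i$, define $\ephi{v}(v_i) \defeq r_i$, $\ephi{r}(i) \defeq \prnt_H(r_i)$, and $\ephi{s}(i) \defeq \prnt_H^{-1}(r_i)$. The antichain hypothesis delivers exactly the disjointness required by \ref{def:pge-4} and \ref{def:pge-5}; full injectivity of $\ephi{s}$ (\ref{def:pge-2}) follows because children of distinct tree nodes are disjoint, and the remaining structural conditions are immediate from the construction. Backward: given any embedding $\phi$, let $r_i \defeq \ephi{v}(v_i)$; \ref{def:pge-1} gives distinctness of the $r_i$ and \ref{def:pge-7} forces $r_i$ to bear colour $c_i$, so it only remains to rule out that some $r_j$ is a descendant of a different $r_i$. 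Apply \ref{def:pge-6} at $v_i$ to observe that the children of $r_i$ in $H$ coincide with $\ephi{s}(i)$; if $r_j$ were a child of $r_i$, then $r_j \in \ephi{s}(i) \cap \rng(\ephi{v})$, contradicting \ref{def:pge-4}; if $r_j$ were a strict descendant deeper down, the unique child $w$ of $r_i$ on the path to $r_j$ would belong to $\ephi{s}(i)$ and, being an ancestor of $\ephi{r}(j) = \prnt_H(r_j)$, also to $\prnt_H^*(\ephi{r}(j))$, contradicting \ref{def:pge-5}. Hence $\{r_0,\ldots,r_{n-1}\}$ is the required rainbow antichain.

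The main obstacle is precisely this last argument: the antichain property is not a consequence of any single PGE condition but emerges only from the combined effect of \ref{def:pge-4}, \ref{def:pge-5}, and \ref{def:pge-6}, and one must case-split on whether the offending $r_j$ is an immediate child or a deeper descendant of $r_i$. A secondary subtlety is the multi-coloured-node workaround in the reduction, which has to be chosen so that it does not introduce spurious embeddings.
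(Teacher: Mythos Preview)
Your reduction from \textsc{RainbowAntichain} is the same one the paper gives; the paper's appendix phrases it in the term-and-grafting formalism of \cite{bmr:tgc14} (and handles membership by appealing to tree isomorphism being in~P), whereas you verify the PGE conditions directly, but the underlying construction and the correctness argument are identical. Your backward direction, extracting the antichain property from the interaction of \ref{def:pge-4}, \ref{def:pge-5} and \ref{def:pge-6}, is correct and in fact more explicit than the paper's.

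One point does need repair: your chain workaround for multi-coloured nodes does not preserve solvability. If a node $v$ of $\mathcal T$ carries both colours $c_i$ and $c_j$, then $v$ alone may legitimately represent \emph{both} colours in a rainbow antichain; after replacing $v$ by a chain, the corresponding embedding would require $\ephi v(v_i)$ and $\ephi v(v_j)$ to lie in an ancestor--descendant relation, which your own backward argument rules out. So the danger is \emph{lost} solutions, not spurious ones. The clean fix is simply to observe that the \textsc{3-Sat}~$\to$~\textsc{RainbowAntichain} reduction already produces only single-coloured instances, so restricting to that subclass suffices for NP-hardness; the paper's own proof silently relies on the same restriction when it writes ``$m = c$ if $v$ has colour $c$''.
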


This proves that deciding the existence of a place graph embedding
(which can be seen as labelled forest pattern matching) of a given
guest into a given host is NP-complete. Moreover, we are interested 
in listing all of them thus making CSP a viable approach.

\section{Implementing the embedding problem in CSP}
\label{sec:csp}

In this Section we present the main contribution of the
paper i.e.~a constraint satisfaction problem that models
bigraph embedding problem. The encoding is based
solely on integer linear constraints and is proven
to be sound and complete.

Initially, we present the encoding for the link graph
embedding problem and for the place graph embedding problem.
Then we combine them providing some additional
``gluing constraints'' to ensure the consistency
of the two sub-problems. The resulting encodings contains
34 constraint families (reflecting the size of the 
problem definition, cf.~in Section~\ref{sec:emb})
and hence taking advantage of the orthogonality of
link and place structures is mandatory for the sake
of both exposition and adequacy proofs. We shall remark
that, despite the constraint families are quite numerous,
the overall number of variables and constraints produced 
by the encoding is polynomially bounded with respect to 
the support cardinality of the involved bigraphs.

\subsection{Link Graphs}\label{sec:lge-csp}

\begin{figure}[t]
	\centering
	\begin{tikzpicture}[>=stealth,scale=.7,
		dot/.style={circle,fill=black,minimum size=5pt,inner sep=0, outer sep=3pt},
		var/.style={->},
		lnk/.style={draw,black!30!green,thick}]
		\node[dot] (n0) at (0,0) {}; %
		\node[dot] (n1) at (-1,-1) {};
		\node[dot] (n2) at (1,1) {};
		\node[dot] (n3) at (4,0) {}; %
		\node[dot] (n4) at (3,1) {};
		\node[dot] (n5) at (5,-1) {};
		\node[dot] (n6) at (0,-4) {}; %
		\node[dot] (n7) at (-1,-3) {};
		\node[dot] (n8) at (1,-5) {};
		\node[dot] (n9) at (4,-4) {}; %
		\node[dot] (n10) at (3,-5) {};
		\node[dot] (n11) at (5,-3) {};
		
		\draw[var,bend right] (n0) to (n3);
		\draw[var,bend left] (n0) to (n6);
		\draw[var,bend left] (n0) to (n7);
		\draw[var,bend left] (n0) to (n8);
		\draw[var,bend left] (n9) to (n3);
		\draw[var,bend left] (n10) to (n3);
		\draw[var,bend left] (n11) to (n3);
				
		\coordinate (l0) at (2,-5.7);
		\coordinate (l1) at (2,-5.8);
		
		\draw [lnk,out=-90,in=180] (n6) to (l0);		
		\draw [lnk, out=-80,in=180] (n8) to (l0);
		\draw [lnk,<-,out=-120,in=0] (n10) to (l0);
		\draw [lnk,out=-90,in=180] (n7) to (l1);
		\draw [lnk,<-,out=-90,in=0] (n11) to (l1);
		
		\node[] (p0) at (2,2) {Network variables};
		\draw[] (p0.south east) -- (p0.south west);
		\draw[] ($(p0.south east)!.5!(p0.south west)$) -- ++(0,-2);
		
		\node[] (p1) at (2,-6.6) {Guest linking};
		\draw[] (p1.north east) -- (p1.north west);
		\draw[] ($(p1.north east)!.5!(p1.north west)$) -- ++(0,.3);
		
		\node[] (p2) at (-3,1) {Host points};
		\draw[shorten >=6pt] (p2.south west) -- (p2.south east) -- ++ (1,0)-- (n0);
		
		\node[] (p3) at (7.5,1) {Host handles};
		\draw[shorten >=6pt] (p3.south east) -- (p3.south west) -- ++ (-1.3,0)-- (n3);

		\node[] (p4) at (-2.85,-5) {Guest points};
		\draw[shorten >=20pt] (p4.south west) -- (p4.south east) -- ++ (.25,0)-- (n6);
		
		\node[] (p5) at (7.35,-5) {Guest handles};
		\draw[shorten >=20pt] (p5.south east) -- (p5.south west) -- ++ (-.6,0)-- (n9);
		
	\end{tikzpicture}
	\caption{Schema of the multi-flux network encoding.}
	\label{fig:lge-flux-net}
\end{figure}

\begin{figure}[!t]
	\begin{align}
	% variabili di rete
	N_{h,h'} \in \{0,\dots,|\link^{-1}_H(h')|\} &\qquad 
		\begin{array}{l}
			h \in E_G \uplus Y_G,\ h' \in E_H \uplus Y_H
		\end{array}
		\label{eq:lge-var-1}\\
	N_{p,h'} \in \{0,1\} &\qquad 
		\begin{array}{l}
			h' \in E_H \uplus Y_H,\ p \in \link_H^{-1}(h')
		\end{array}
		\label{eq:lge-var-2}\\
	N_{p,p'} \in \{0,1\} &\qquad 
		\begin{array}{l}
			p' \in X_G \uplus P_G, \ p \in X_H\uplus P_H
		\end{array}
		\label{eq:lge-var-3}\\
	% variabili di flusso
	F_{h,h'} \in \{0,1\} &\qquad 
		\begin{array}{l}
			h \in E_G \uplus Y_G,\ h' \in E_H \uplus Y_H
		\end{array}
		\label{eq:lge-var-4}\\
	% i punti di H sono sorgenti
	\sum_{k}N_{p,k} = 1 &\qquad 
		\begin{array}{l}
			p \in X_H\uplus P_H
		\end{array}
		\label{eq:lge-cs-1}\\
	% le maniglie di H sono pozzi
	\sum_{k}N_{k,h} = |\link^{-1}_H(h)| &\qquad 
		\begin{array}{l}
			h \in E_H\uplus Y_H
		\end{array}
		\label{eq:lge-cs-2}\\
	% conservazione del flusso nel redex
	\sum_{k}N_{h,k} = \sum_{p \in \link_G^{-1}(h)}\sum_{k}N_{k,p} &\qquad 
		\begin{array}{l}
			h \in E_G\uplus Y_G
		\end{array}
		\label{eq:lge-cs-3}\\
	% le porte di G sono "sorgenti"
	\sum_{k}N_{k,p} = 1 &\qquad 
		\begin{array}{l}
			p \in X_G\uplus P_G
		\end{array}
		\label{eq:lge-cs-4}\\
	N_{p,p'} = 0 & \qquad
		\begin{array}{l}
			p' \in P_G,\ p \in X_H
		\end{array}
		\label{eq:lge-cs-5}\\
	% legame tra rete e flusso
	\frac{N_{h,h'}}{|\link_H^{-1}(h')|} \leq F_{h,h'} \leq N_{h,h'} &\qquad
		\begin{array}{l}
			h \in E_G\uplus Y_G,\ h' \in E_H\uplus Y_H, \\
			|\link_G^{-1}(h)| > 0,\ |\link_H^{-1}(h')| > 0
		\end{array}
		\label{eq:lge-cs-6}\\
	N_{p,p'} \leq F_{h,h'} & \qquad
		\begin{array}{l}
			h \in E_G\uplus Y_G,\ h' \in E_H\uplus Y_H, \\
			p \in \link_G^{-1}(h),\ p' \in \link_H^{-1}(h')
		\end{array}
		\label{eq:lge-cs-7}\\
	F_{h,h'} \leq \sum_{\substack{p \in \link_G^{-1}(h)\\ p' \in \link_H^{-1}(h')}} N_{p,p'} & \qquad
		\begin{array}{l}
			h \in E_G\uplus Y_G,\ h' \in E_H\uplus Y_H, \\
			\link_G^{-1}(h) \cup \link_H^{-1}(h') \neq \emptyset
		\end{array}
		\label{eq:lge-cs-8}\\
	% separazione dei flussi
	\sum_{k}F_{h,k} = 1 &\qquad 
		\begin{array}{l}
			h \in E_G\uplus Y_G
		\end{array}
		\label{eq:lge-cs-9}\\
	% uniformità dei tipi di flusso
	N_{p,h'} + F_{h,h'} \leq 1 & \qquad
		\begin{array}{l}
			h \in E_G,\ h' \in E_H\uplus Y_H,\ 
			p \in \link_H^{-1}(h')
		\end{array}
		\label{eq:lge-cs-10}\\		
	F_{h,h'} + F_{h'',h'} \leq 1 & \qquad
		\begin{array}{l}
			h \in E_G,\ h' \in Y_H,\ h'' \in Y_G 
		\end{array}
		\label{eq:lge-cs-11}\\
	F_{h,h'} = 0 & \qquad
		\begin{array}{l}
			h \in E_G,\ h' \in Y_H 
		\end{array}
		\label{eq:lge-cs-12}\\
	F_{h,h'} \leq 1 & \qquad
		\begin{array}{l}
			h \in E_G \uplus Y_G,\ h' \in E_H 
		\end{array}
		\label{eq:lge-cs-13}\\
	N_{p,p'} = 0 & \qquad
		\begin{array}{l}
			v \in V_G,\ v' \in V_H, \ctrl_G(v) = \ctrl_H(v) = c\, \\
			i\neq i' \leq c,\ p = (v,i),\ p' = (v',i')
		\end{array}	
		\label{eq:lge-cs-14}\\
	N_{p,p'} = 0 & \qquad
		\begin{array}{l}
			v \in V_G,\ v' \in V_H, \ctrl_G(v) \neq \ctrl_H(v),\\
			p = (v,i),\ p' = (v',i')
		\end{array}	
		\label{eq:lge-cs-15}\\
	\sum_{j \leq c} 
		N_{(v,j),(v',j)} = c\cdot N_{p,p'} & \qquad
		\begin{array}{l}
			v \in V_G,\ v' \in V_H, \ctrl_G(v) = \ctrl_H(v) = c,\\
			i \leq c,\ p = (v,i),\ p' = (v',i)
		\end{array}
		\label{eq:lge-cs-16}
	\end{align}
	\caption{Constraints of \textsc{LGE}[$G,H$].}
	\label{fig:lge-csp}
\end{figure}

Let us fix the guest and host concrete bigraphs:
$G : X_G \to Y_G$ and $H : X_H \to Y_H$. We characterize
the embeddings of $G$ into $H$ as the solutions of a suitable
multi-flux problem which we denote as \textsc{LGE}[$G,H$]. 
The main idea is to see the host points (i.e.~ports and inner names)
and handles (i.e.~edges and outer names) as sources and
sinks respectively: each point outputs a flux unit and each handle
inputs one unit for each point it links. Units flows towards each point handle
following $H$ hyper-edges and optionally taking a ``detour'' along the 
linking structure of the guest $G$ (provided that some conditions regarding
structure preservation are met).
Figure~\ref{fig:lge-flux-net} and Figure~\ref{fig:lge-csp}
contain a sketch of the flux problem and its formal definition
respectively.

The flux network reflects the linking structure and contains an edge
connecting each point to its handle; these edges have an integer capacity
limited to $1$ and are represented by the variables defined in \eqref{eq:lge-var-2}.
The remaining edges of the network are organised in two complete biparted graphs:
one between guest and host handles and one between guest and host points.
Edges of the first sub-network are described by the variables in \eqref{eq:lge-var-1}
and their capacity is bounded by the number of points linked by the host handle
since this is the maximum acceptable flux and corresponds to the case where
each point passes through the same hyper-edge of the guest link graph.
Edges of the second sub-network are described by the variables in \eqref{eq:lge-var-3}
and, like the first group of links, have their capacity limited to $1$; 
to be precise, some of these variables will never assume a value different from $0$
because guest points can receive flux from anything but the host ports (as
expressed by constraint \eqref{eq:lge-cs-5}).
Edges for the link structure of the guest are presented implicitly in the
flux preservation constraints (see constraint \eqref{eq:lge-cs-3}). In order to fulfil the injectivity
conditions of link embeddings, some additional \emph{flux variables}
(whereas the previous are \emph{network variables})
are defined by \eqref{eq:lge-var-4}. These are used to
keep track and separate each flux on the bases of the points handle\footnote{%
The problem can be presented without the additional flux variables, but we found
this approach more readable.}.

The constraint families \eqref{eq:lge-cs-1} and \eqref{eq:lge-cs-2}
define the outgoing and ingoing flux of host points and handles respectively.
The firsts have to send exactly one unit considering every edge they
are involved into and the seconds receive one unit for each of their
point regardless if this unit comes from the point directly or from
a handle of the guest. 

The linking structure of the guest graph is
encoded by the constraint family \eqref{eq:lge-cs-3} which states
that flux is preserved while passing through the guest i.e.~the output
of each handle has to match the overall input of the points it connects.

\looseness=-1
Constraints \eqref{eq:lge-cs-4}, \eqref{eq:lge-cs-5}, 
\eqref{eq:lge-cs-14}, \eqref{eq:lge-cs-15} and  \eqref{eq:lge-cs-16} 
shape the flux in the sub-network linking guest and host points.
Specifically, \eqref{eq:lge-cs-4} requires that each point from the guest 
receives exactly one unit or, the other way round, that guest points are
assigned with disjoint sets of points from the host.
Constraints \eqref{eq:lge-cs-5}, \eqref{eq:lge-cs-14} and \eqref{eq:lge-cs-15}
disable edges between guest ports and host inner names, between mismatching 
ports of matching nodes and between ports of mismatching nodes.
Finally, the flux of ports belonging to the same node has to act compactly
i.e.~if there is flux between the $i$-th ports of two nodes, then, there
should be flux between every other matching ports as expressed by
\eqref{eq:lge-cs-16}.

\looseness=-1
Constraints \eqref{eq:lge-cs-6}, \eqref{eq:lge-cs-7} and \eqref{eq:lge-cs-8}
relates flux and network variables ensuring that the formers assume
a true value if, and only, if there is actual flux between the corresponding
guest and host handles. In particular, \eqref{eq:lge-cs-7}  propagates
the information about the absence of flux between handles disabling the sub-network
linking handles points and, \emph{vice versa},  \eqref{eq:lge-cs-8}
propagates the information in the other way disabling flux between handles
if there is no flux between their points. 

The remaining constraints prevent fluxes from mixing. Constraint \eqref{eq:lge-cs-9}
requires guest handles to send their output to exactly one destination
thus renders the sub-network between handles a function assigning guest
handles to host handles. This mapping is subject to some additional
conditions when edges are involved: \eqref{eq:lge-cs-12} and \eqref{eq:lge-cs-13} 
ensure that the edges are injectively mapped to edges only, \eqref{eq:lge-cs-11}
forbids host outer names to receive flux from an edge and an outer name at the same time.
Finally, constraint \eqref{eq:lge-cs-10} states that the output of host points
cannot bypass the guest if there is flux between its handle and an edge
from the guest.

\paragraph{Adequacy}
Let $\vec N$ be a solution of \textsc{LGE[$G,H$]}. The corresponding
link graph embedding $\phi : G \emb H$ is defined as follows:
\begin{align*}
	\ephi{v}(v)&\defeq 
		v'  \in V_H : \exists i : N_{(v,i),(v',i)} = 1 \\
	\ephi{e}(e)&\defeq 
		e' \in E_H : F_{e,e'} = 1 \\
	\ephi{o}(y)&\defeq 
		y' \in Y_H : F_{y,y'} = 1 \\
	\ephi{i}(x)&\defeq 
		\{x' \in X_H \uplus P_H \mid N_{x',x} = 1\}
\end{align*}
The components of $\phi$ just defined are well-given and
compliant with Definition~\ref{def:lge}.
On the other way round, let ${\phi : G \emb H}$ be a link
graph embedding. The corresponding solution $\vec N$ 
of \textsc{LGE[$G,H$]} is defined as follows:
\begin{align*}
	N_{p,p'}&\defeq
	\begin{cases}
		1 & \mbox{if } p' \in X_G \land p \in \ephi{i}(p') \\
		1 & \mbox{if } p' = (v,i) \land p = (\ephi{v}(v),i) \\
		0 & \mbox{otherwise}
	\end{cases}\\
	N_{p,h'}&\defeq
	\begin{cases}
		1 & \mbox{if } h' = \link_H(p) \land \nexists p' : N_{p,p'} = 1 \\
		0 & \mbox{otherwise}
	\end{cases}\\
	N_{h,h'}&\defeq
	\begin{cases}
		1 & \mbox{if } h \in E_H \land h' \in E_G \land h = \ephi{e}(h') \\
		1 & \mbox{if } h \in Y_H \land h' \land h = \ephi{o}(h') \\
		0 & \mbox{otherwise}
	\end{cases}
\end{align*}
Clearly $F_{h,h'} = 1 \defiff N_{h,h'} \neq 0$. Then it is
easy to check that every constraint of \textsc{LGE[$G,H$]}
is satisfied by the solution just defined.

The constraint satisfaction problem in Figure~\ref{fig:lge-csp}
is sound and complete with respect to the link graph embedding problem
given in Definition~\ref{def:lge}.
\begin{proposition}[Adequacy of \textsc{LGE}]
\label{prop:lge-adequacy}
For any two concrete link graphs $G$ and $H$,
there is a bijective correspondence between
the link graph embeddings of $G$ into $H$ and
the solutions of \textsc{LGE[$G,H$]}.
\end{proposition}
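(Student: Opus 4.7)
The plan is to exhibit the two translations (already written in the excerpt) as mutually inverse maps and check, case by case, that each translates constraints into embedding conditions and vice versa. Accordingly, I would structure the argument as two lemmas -- well-definedness of the map $\Phi : \vec N \mapsto \phi$ from solutions to embeddings, and well-definedness of the map $\Psi : \phi \mapsto \vec N$ from embeddings to solutions -- followed by a short paragraph on $\Phi\circ\Psi = \mathrm{id}$ and $\Psi\circ\Phi = \mathrm{id}$.

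For $\Phi$, I would first check that the four components $\ephi v,\ephi e,\ephi i,\ephi o$ are actually functions on the right domains. The delicate point is $\ephi v$: its definition picks, for each guest node $v$, some host node $v'$ with $N_{(v,i),(v',i)}=1$; uniqueness and totality of such $v'$ follow from the port-coupling constraint \eqref{eq:lge-cs-16} together with \eqref{eq:lge-cs-4} (each guest port receives exactly one unit) and the control-matching constraints \eqref{eq:lge-cs-14}--\eqref{eq:lge-cs-15}. Once $\ephi v$ is shown to be a well-defined injection, \ref{def:lge-6} is immediate from \eqref{eq:lge-cs-15}, and \ref{def:lge-1} for $\ephi e$ follows from \eqref{eq:lge-cs-9} combined with \eqref{eq:lge-cs-13}. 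Fulness of $\ephi i$ (condition \ref{def:lge-2}) is the contrapositive of \eqref{eq:lge-cs-1}, which forces each host point to contribute flux to at most one target. Condition \ref{def:lge-4} splits into two disjointness claims: the first is \eqref{eq:lge-cs-11}--\eqref{eq:lge-cs-12}, the second is \eqref{eq:lge-cs-10} after unfolding the definition of $\ephi{port}$. Finally, \ref{def:lge-5} and \ref{def:lge-7} -- the structural preservation of linking -- are the heart of the argument: they should be deduced from the flux conservation law \eqref{eq:lge-cs-3} together with the coupling constraints \eqref{eq:lge-cs-6}--\eqref{eq:lge-cs-8}, which synchronise $F_{h,h'}$ with $N_{p,p'}$ on the two biparted subnetworks.

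For $\Psi$, I would verify the constraints in the same order as they appear in Figure~\ref{fig:lge-csp}. The variable ranges \eqref{eq:lge-var-1}--\eqref{eq:lge-var-4} are immediate from the definitions of $N$ given in the excerpt. The outgoing flux \eqref{eq:lge-cs-1} reduces to the fact that each host point either lies in the image of some $\ephi i(x)$ (by \ref{def:lge-2}) or is routed directly to its own handle. The incoming flux \eqref{eq:lge-cs-2} and the conservation law \eqref{eq:lge-cs-3} follow by partitioning $\link_H^{-1}(h')$ according to whether points are covered by an edge-image (then accounted by \ref{def:lge-5}) or by an inner-name image (then accounted by \ref{def:lge-2}) or routed directly. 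The remaining constraints are local and follow straight from conditions \ref{def:lge-1}--\ref{def:lge-7}; for instance \eqref{eq:lge-cs-16} encodes that $\ephi{port}$ is induced by $\ephi v$.

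The mutual inverse property is then almost formal: $\Phi\circ\Psi(\phi)$ recovers each submap because the definitions pick exactly the pair $(p,p')$ or $(h,h')$ that $\Psi$ set to $1$, and $\Psi\circ\Phi(\vec N) = \vec N$ because a $0/1$-valued flow satisfying \eqref{eq:lge-cs-1}--\eqref{eq:lge-cs-4} is uniquely determined by the locations of its $1$-entries on the two biparted subnetworks. The main obstacle I expect is bookkeeping in the flux-conservation step \eqref{eq:lge-cs-3}: one must carefully argue that the total flux a guest handle $h$ emits towards host handles equals the sum, over guest points $p\in\link_G^{-1}(h)$, of the flux arriving at $p$ from the host side, and this requires simultaneously using \ref{def:lge-5}, \ref{def:lge-7} and the disjointness clause \ref{def:lge-4}. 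Everything else is a methodical verification.
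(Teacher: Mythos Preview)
Your proposal is correct and follows precisely the approach of the paper: the paper defines the same two translations $\Phi$ and $\Psi$ in the ``Adequacy'' paragraph preceding the proposition, and then simply asserts that the components of $\phi$ are ``well-given and compliant with Definition~\ref{def:lge}'' and that ``it is easy to check that every constraint of \textsc{LGE[$G,H$]} is satisfied.'' Your plan is a faithful and considerably more detailed elaboration of this sketch; the only caveat is that a couple of your constraint-to-condition attributions (e.g.\ deriving injectivity of $\ephi e$ from \eqref{eq:lge-cs-9} and \eqref{eq:lge-cs-13} alone, or the first half of \ref{def:lge-4} from \eqref{eq:lge-cs-11}--\eqref{eq:lge-cs-12}) will need the flux-balance constraints \eqref{eq:lge-cs-2}, \eqref{eq:lge-cs-3} and \eqref{eq:lge-cs-10} as well, but this is bookkeeping of exactly the kind you already anticipate.
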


\subsection{Place Graphs}

\begin{figure}[t]
	\begin{align}
		M_{h,g} \in \{0,1\} & \qquad 
			\begin{array}{l}
				g \in n_G \uplus V_G \uplus m_G,\ \\
				h \in n_H \uplus V_H \uplus m_H
			\end{array}
			\label{eq:pge-var-1}\\
		M_{h,g} = 0 & \qquad
			\begin{array}{l}
				g \in n_G \uplus V_G,\ h \in m_H
			\end{array}
			\label{eq:pge-cs-1}\\
		M_{h,g} = 0 & \qquad
			\begin{array}{l}
				g \in V_G \uplus m_G,\ h \in n_H
			\end{array}
			\label{eq:pge-cs-2}\\
		M_{h,g} = 0 & \qquad
			\begin{array}{l}
				g \in V_G ,\ h \in V_H,\\
				\ctrl_G(g) \neq \ctrl_H(h)
			\end{array}
			\label{eq:pge-cs-3}\\
		M_{h,g} = 0 & \qquad
			\begin{array}{l}
				g \in m_G ,\ h \notin m_H,\\ 
				v \in \prnt_H^*(h) \cap V_G,\\
				\ctrl_G(v) \notin \Sigma_a
			\end{array}
			\label{eq:pge-cs-4}\\
		M_{h,g} \leq M_{h',g'} & \qquad
			\begin{array}{l}
				g \notin m_G,\ g' \in \prnt_G(g),\\ 
				h \notin m_H,\ h' \in \prnt_H(h)
			\end{array}
			\label{eq:pge-cs-5}\\
		\sum_{h \in V_H \uplus m_H} M_{h,g} = 1 &\qquad 
			\begin{array}{l}
				g \in m_G
			\end{array}
			\label{eq:pge-cs-6}\\
		\sum_{h \in n_H \uplus V_H} M_{h,g} = 1 &\qquad 
			\begin{array}{l}
				g \in V_G
			\end{array}
			\label{eq:pge-cs-7}\\
		m_G \cdot \sum_{g \in n_G\uplus V_G} M_{h,g} + \sum_{g \in m_G} M_{h,g}
		\leq m_G &\qquad
			\begin{array}{l}
				h \in V_H
			\end{array}
			\label{eq:pge-cs-8}\\
		|\prnt_H^{-1}(h)|\cdot M_{h,g} \leq 
			\sum_{\substack{h' \in \prnt_H^{-1}(h),\\ g' \in \prnt_G^{-1}(g)}} 
			M_{h'\!,g'} & \qquad
			\begin{array}{l}
				g \in V_G,\ h \in V_H
			\end{array}
			\label{eq:pge-cs-9}\\
		|\prnt_G^{-1}(g)\setminus n_G|\cdot M_{h,g} \leq 
			\sum_{\substack{h' \in \prnt_H^{-1}(h)\setminus n_h,\\ g' \in \prnt_G^{-1}(g) \setminus n_g}} 
			M_{h'\!,g'} & \qquad
			\begin{array}{l}
				g \in m_G,\ h \in V_H
			\end{array}
			\label{eq:pge-cs-10}\\
		M_{h,g} + \sum_{\substack{h' \in \prnt_H^*(h),\\g'\in m_G}}M_{h',g'} \leq 1 &\qquad
			\begin{array}{l}
				g \in V_G,\ h\in V_H
			\end{array}
			\label{eq:pge-cs-11}
	\end{align}
	\caption{Constraints of \textsc{PGE}[$G,H$].}
	\label{fig:pge-csp}
\end{figure}

Let us fix the guest and host place graphs:
$G : n_G \to m_G$ and $H : n_H \to m_H$. We characterize
the embeddings of $G$ into $H$ as the solutions of the
constraint satisfaction problem in Figure~\ref{fig:pge-csp}.
The problem is a direct encoding of Definition~\ref{def:pge}
as a matching problem presented, as usual, as a biparted graph.
Sites, nodes and roots of the two place graphs are represented as nodes
and parted into the guest and the host ones. For convenience of exposition, 
graph is complete.

Edges are modelled by the boolean variables defined in \eqref{eq:pge-var-1};
these are the only variables used by the problem. So far a solution is
nothing more than a relation between the components of guest and host
containing only those pairs connected by an edge assigned a non-zero value.
To capture exactly those assignments that are actual place graph embeddings
some conditions have to be imposed.

Constraints \eqref{eq:pge-cs-1} and \eqref{eq:pge-cs-2}
prevent roots and sites from the host to be matched with nodes or sites
and nodes or roots respectively. \eqref{eq:pge-cs-3} 
disables matching between nodes decorated with different controls.
Constraint \eqref{eq:pge-cs-4} prevents any matching for host nodes
under a passive context (i.e.~have an ancestor labelled with a passive control).
\eqref{eq:pge-cs-5} propagates the matching along the parent map from children
to parents. Constraints \eqref{eq:pge-cs-6} and \eqref{eq:pge-cs-7}
ensure that the matching is a function when restricted to guest nodes and roots
(the codomain restriction follows by \eqref{eq:pge-cs-1} and \eqref{eq:pge-cs-2}).
\eqref{eq:pge-cs-8} says that if a node from the host cannot be
matched with a root or a node/site from the guest at the same time;
moreover, if the host node is matched with a node then it cannot be matched
to anything else.

The remaining constraints are the counterpart of \eqref{eq:pge-cs-5}
and propagate matchings from parents to children.
\eqref{eq:pge-cs-9} applies on matchings between nodes and says
that if parents are matched, then children from the host node are covered
by children from the guest node. In particular, the matching is a perfect assignment
when restricted to guest children that are nodes  (because of \eqref{eq:pge-cs-8})
and is a surjection on those that are sites.
\eqref{eq:pge-cs-10} imposes a similar condition on matchings between
guest roots and host nodes. Specifically, it says that the matching have
to cover child nodes from the guest (moreover, it is injective on them)
leaving child sites to match whatever remains ranging from nothing to all
unmatched children. Finally, \eqref{eq:pge-cs-11} prevent matching from happening inside a parameter.

\paragraph{Adequacy}
Let $\vec M$ be a solution of \textsc{PGE[$G,H$]}. The corresponding
place graph embedding $\phi : G \emb H$ is defined as follows:
\begin{align*}
	\ephi{v}(g)&\defeq 
		h  \in V_H : \exists i : M_{h,g} = 1 \\
	\ephi{s}(g)&\defeq 
		\{h \in n_h \uplus V_H \mid M_{h,g} = 1\}\\
	\ephi{r}(g)&\defeq 
		h \in m_H \uplus V_H : M_{h,g} = 1
\end{align*}
The components of $\phi$ just defined are well-given and
compliant with Definition~\ref{def:pge}.

On the opposite direction, let ${\phi : G \emb H}$ be a place
graph embedding. The corresponding solution $\vec M$ 
of \textsc{PGE[$G,H$]} is defined as follows:
\[
	M_{h,g} \defeq
	\begin{cases}
		1 & \mbox{if } g \in V_G \land h = \ephi{v}(g) \\
		1 & \mbox{if } g \in m_G \land h = \ephi{r}(g) \\
		1 & \mbox{if } g \in n_G \land h \in \ephi{s}(g) \\
		0 & \mbox{otherwise}
	\end{cases}
\]
It is easy to check that every constraint of \textsc{PGE[$G,H$]}
is satisfied by the solution just defined.

The constraint satisfaction problem in Figure~\ref{fig:pge-csp}
is sound and complete with respect to the place graph embedding problem
given in Definition~\ref{def:pge}.
\begin{proposition}[Adequacy of \textsc{PGE}]
\label{prop:pge-adequacy}
For any two concrete place graphs $G$ and $H$,
there is a bijective correspondence between
the place graph embeddings of $G$ into $H$ and
the solutions of \textsc{PGE[$G,H$]}.
\end{proposition}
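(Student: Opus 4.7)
The plan is to mirror the pattern of Proposition~\ref{prop:lge-adequacy}: establish two maps, from solutions of \textsc{PGE}[$G,H$] to embeddings and back, verify that each lands in the intended codomain, and then observe that the two constructions are mutually inverse. The organizational split that makes this manageable is to align each PGE-$k$ clause of Definition~\ref{def:pge} with a small group of constraint families from Figure~\ref{fig:pge-csp}, which is essentially how the figure was designed.

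First I would check that the forward map (solution $\vec M \mapsto \phi$) is well-given and yields a place graph embedding. Well-definedness of $\ephi{v}$ and $\ephi{r}$ as (partial) functions uses \eqref{eq:pge-cs-7}, \eqref{eq:pge-cs-6} (uniqueness of the target) together with \eqref{eq:pge-cs-1}, \eqref{eq:pge-cs-2} (the target sits in the correct set), while $\ephi{s}$ is set-valued so no uniqueness is needed. For the conditions of Definition~\ref{def:pge} I would read them off as follows: \ref{def:pge-1} (injectivity of $\ephi v$) from \eqref{eq:pge-cs-8}, which forces each host node to be matched by at most one guest node; \ref{def:pge-2} (full injectivity of $\ephi s$) from \eqref{eq:pge-cs-8} combined with the fact that the sets $\ephi s(g)$ are defined by indicator variables and a host element satisfying $M_{h,g}=M_{h,g'}=1$ for distinct sites would contradict the bound; \ref{def:pge-3} is immediate from \eqref{eq:pge-cs-6}; \ref{def:pge-4} follows again from \eqref{eq:pge-cs-8}; \ref{def:pge-5} (no root is a descendant of another image) from \eqref{eq:pge-cs-11}, which forbids any node matched to a guest node from lying under a host node that realises a guest root; \ref{def:pge-6} (surjectivity on children of matched nodes) from \eqref{eq:pge-cs-9} and \eqref{eq:pge-cs-10}, which together say that each host-child of a matched host node must be covered by a guest-child; \ref{def:pge-7} (control preservation) from \eqref{eq:pge-cs-3}; and finally \ref{def:pge-8} (preservation of parents) from \eqref{eq:pge-cs-5}, noting that the variants for $g \in V_G$ and $g \in n_G$ are handled uniformly.

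For the reverse map ($\phi \mapsto \vec M$) I would verify the constraint families one by one, using the clauses of Definition~\ref{def:pge} directly: \eqref{eq:pge-cs-1}–\eqref{eq:pge-cs-3} are typing conditions that follow from the codomains of $\ephi v,\ephi s,\ephi r$ and \ref{def:pge-7}; \eqref{eq:pge-cs-4} uses the assumption on passive controls (carried over from Definition~\ref{def:pge}); \eqref{eq:pge-cs-5} is exactly \ref{def:pge-8}; \eqref{eq:pge-cs-6}, \eqref{eq:pge-cs-7} express that $\ephi r$ and $\ephi v$ are total single-valued maps; \eqref{eq:pge-cs-8} expresses that $\rng(\ephi v)$, $\bigcup\rng(\ephi s)$ and $\rng(\ephi r)$ are pairwise disjoint (\ref{def:pge-4}) plus the injectivity \ref{def:pge-1}; \eqref{eq:pge-cs-9}–\eqref{eq:pge-cs-10} unfold \ref{def:pge-6} and the fact that a matched host node's children must all be covered; and \eqref{eq:pge-cs-11} is the direct encoding of \ref{def:pge-5}. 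Each verification is a routine unpacking of the definitions, and the cases are mechanical once the correspondence above is fixed.

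The only real subtlety, and the step I would expect to require the most care, is the surjectivity/partition argument behind \eqref{eq:pge-cs-9}–\eqref{eq:pge-cs-10}: one must check that the way the forward map distributes a host-child among the potentially many $\ephi s(g)$'s is consistent with \ref{def:pge-4} (no duplication) while still covering every child of a matched host node (so the inequality is actually an equality up to the site slack). Once this is settled, the mutual inverse property is immediate: starting from $\vec M$, building $\phi$ and then reconstructing $\vec M'$ produces $M'_{h,g}=1$ iff one of the three matching cases in the definition of the reverse map holds, which by the forward definitions occurs iff $M_{h,g}=1$; and starting from $\phi$, the same round-trip recovers $\phi$ componentwise. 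This yields the claimed bijection.
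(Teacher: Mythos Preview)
Your proposal is correct and follows exactly the same approach as the paper: define the forward map $\vec M \mapsto \phi$ and the backward map $\phi \mapsto \vec M$ as in the Adequacy paragraph preceding the proposition, then verify that each lands where claimed and that the two are mutual inverses. The paper is in fact even terser than your outline---it simply states the two maps and asserts that ``the components of $\phi$ just defined are well-given and compliant with Definition~\ref{def:pge}'' and that ``it is easy to check that every constraint of \textsc{PGE[$G,H$]} is satisfied''---so your clause-by-clause alignment of the PGE-$k$ conditions with the constraint families \eqref{eq:pge-cs-1}--\eqref{eq:pge-cs-11} is a strict elaboration of what the paper leaves implicit.
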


\subsection{Bigraphs}
Let $G : \face{n_G,X_G} \to \face{m_G,Y_G}$ and $H : \face{n_H,X_H} \to \face{m_H,Y_H}$ 
be two concrete bigraphs. By taking advantage of the orthogonality
of the link and place structure we can define the constraint satisfaction
problem capturing bigraph embeddings by simply composing the constraints
given above for the link and place graph embeddings and by adding just two 
consistency constraints to relate the solutions of the two problems.

These additional constraint families are reported in Figure~\ref{fig:bge-csp}.
The family \eqref{eq:bge-cs-1} ensures that solutions for 
\textsc{LGE[$G,H$]} and \textsc{PGE[$G,H$]} agree on nodes since 
the map $\ephi{v}$ has to be shared by the corresponding 
link and place embeddings. 
The family \eqref{eq:bge-cs-2} ensures that ports are in the image
of inner names (i.e.~send their flux unit to them) only if their node is part of
the parameter i.e.~only if it is matched to a site from the guest or it descends from a node that is so. 

\begin{figure}[t]
	\begin{align}
	M_{v,v'} = N_{p,p'} &\qquad
		v \in V_H,\ v' \in V_G,\ 
		p = (v,k),\ p' = (v',k)
		\label{eq:bge-cs-1}\\
	\sum_{p' \in X_G} N_{p,p'} \leq \sum_{\substack{h \in \prnt_H^*(v),\\g \in n_G}} M_{h,g} &\qquad
		v \in V_H,\ p = (v,k)
		\label{eq:bge-cs-2}
	\end{align}
	\caption{Constraints of \textsc{BGE}[$G,H$].}
	\label{fig:bge-csp}
\end{figure}

It is easy to check that \eqref{eq:bge-cs-2} corresponds exactly to
condition \ref{def:bge-1}. Therefore, from Proposition~\ref{prop:lge-adequacy} and
Proposition~\ref{prop:pge-adequacy}, the constraint satisfaction problem defined by
Figures~\ref{fig:lge-csp}, Figure~\ref{fig:pge-csp} and Figure~\ref{fig:bge-csp}
is sound and complete with respect to the bigraph embedding problem
given in Definition~\ref{def:bge} as stated by below.
\begin{proposition}[Adequacy of \textsc{BGE}]
\label{prop:bge-adequacy}
For any two concrete bigraphs $G$ and $H$,
there is a bijective correspondence between
the bigraph embeddings of $G$ into $H$ and
the solutions of \textsc{BGE[$G,H$]}.
\end{proposition}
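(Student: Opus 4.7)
The plan is to leverage Propositions~\ref{prop:lge-adequacy} and~\ref{prop:pge-adequacy} as black boxes and treat the bigraph case as their conjunction plus the two gluing constraints \eqref{eq:bge-cs-1} and \eqref{eq:bge-cs-2}. The variables of \textsc{BGE}[$G,H$] partition cleanly into those of \textsc{LGE}[$G^L,H^L$] and those of \textsc{PGE}[$G^P,H^P$], and the two new constraint families refer to both kinds but introduce no new variables. Dually, Definition~\ref{def:bge} presents a bigraph embedding as a pair $(\ephi P,\ephi L)$ that is independently a place and a link embedding, subject to the tacit requirement that the two component embeddings share the same node map and to the extra consistency condition \ref{def:bge-1}. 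Hence the claim reduces to showing that \eqref{eq:bge-cs-1} forces the two sub-solutions to agree on nodes and that \eqref{eq:bge-cs-2} is exactly \ref{def:bge-1}.

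For the forward direction, I would start from a solution $(\vec N,\vec M)$ of \textsc{BGE}[$G,H$]; its $\vec N$-part satisfies \textsc{LGE}[$G^L,H^L$] and its $\vec M$-part satisfies \textsc{PGE}[$G^P,H^P$], so the previous propositions yield a link embedding $\ephi L$ and a place embedding $\ephi P$. Their node components coincide by \eqref{eq:bge-cs-1}, which equates $M_{v,v'}$ with each $N_{(v,k),(v',k)}$, combined with \eqref{eq:lge-cs-16} (which forces the port variables between two matched nodes to agree across all port indices): thus the image of a guest node under the link-side node map matches its image under the place-side node map. The left-hand side of \eqref{eq:bge-cs-2} is exactly the indicator that the port $(v,k)$ lies in the image of some inner name under $\ephi i$, while its right-hand side is the indicator that $v$ has an ancestor in $\ephi s(s)$ for some $s$, so this inequality is a literal transcription of \ref{def:bge-1}.

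The reverse direction is almost bookkeeping: given a bigraph embedding $\phi$, define $\vec N$ and $\vec M$ by the two translation recipes already displayed for link and place embeddings; by Propositions~\ref{prop:lge-adequacy} and~\ref{prop:pge-adequacy} they solve the respective CSPs, \eqref{eq:bge-cs-1} holds because $\ephi v$ is by definition shared by $\ephi P$ and $\ephi L$, and \eqref{eq:bge-cs-2} follows by unfolding \ref{def:bge-1}. The two assignments are mutual inverses because they operate term-wise on disjoint slices of the data. The only nontrivial point is the reconciliation of the two different encodings of the node injection -- explicit through the $M$ variables on the place side and implicit through the port variables $N$ on the link side -- and matching these two representations via \eqref{eq:bge-cs-1} together with \eqref{eq:lge-cs-16} is the one step where the argument is not pure symbol pushing.
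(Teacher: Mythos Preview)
Your proposal is correct and follows essentially the same approach as the paper: the paper's own argument is a single sentence observing that \eqref{eq:bge-cs-2} corresponds exactly to \ref{def:bge-1} and invoking Propositions~\ref{prop:lge-adequacy} and~\ref{prop:pge-adequacy}, with the role of \eqref{eq:bge-cs-1} described informally in the preceding paragraph. Your write-up supplies strictly more detail than the paper does---in particular the explicit appeal to \eqref{eq:lge-cs-16} to reconcile the port-based and node-based encodings of $\ephi v$---but the underlying decomposition and the key steps are the same.
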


\section{Conclusions and future works}\label{sec:concl}

In this paper, we have presented a sound and complete algorithm for solving the
bigraph embedding problem, based on a constraint satisfaction problem.
The resulting model is compact and composed by a number of variables
and linear constraints, polynomially bounded by the size of the guest
and host bigraphs. Remarkably, this algorithm does not require hosts
to be ground.

The CSP approach offers a great flexibility, e.g.~allowing to move
execution strategies upstream into CSP calculation. An example can
be found in this paper: in practice, the notion of active/passive
contexts defines an execution strategy.

This approach naturally suggests several interesting extensions of the
bigraphic model itself. We can think of \emph{weighted} bigraphs,
where nodes and edges can be given ``weights'', and firing a rule has
a cost related to the nodes involved.  This extension would lead us
to consider
``approximated embeddings'': a guest can be embedded up-to some distance
based on costs (e.g. missing controls, or quantitative differences
between controls,\dots). In both cases, our implementation can be
straightforwardly adapted just by adding a cost function, in order to
give the optimal matchings; this would be far more efficient and
easier to implement than dealing with these issues at the level of
strategies (if possible at all).

The proposed approach can be easily applied also to extensions of
the bigraphs, e.g.~directed bigraphs \cite{gm:mfps07}, bigraphs with
sharing \cite{cs:ifm12} or local bigraphs.  An interesting direction
would be to extend the algorithm also to stochastic and probabilistic
bigraphs \cite{kmt:mfps08} which will offer useful modelling and
verification tools for quantitative aspects, e.g.~for biological
problems \cite{bgm:biobig,dhk:fcm}. 

The algorithm has been successfully integrated into LibBig,
an extensible library for manipulating bigraphical reactive systems.
This library can be used for implementing a wide range of tools and
it can be adapted to support several extensions of bigraphs.
Approximated and weighted embeddings are supported too but are still
an experimental feature. In fact, the theoretical foundations and implications
of these extensions have not been fully investigated yet suggesting
another line of research.

The empirical evaluation of the implementation available in LibBig
looks promising. It cannot be considered a rigorous experimental
validation mainly because performance depends on the solver, and the
model is not optimized for any particular solver. Moreover, up to now
there are no ``official'' (or ``widely recognized'') benchmarks for
the bigraph embedding problem.
An algorithm for the bigraph matching problem is proposed in
\cite{gdbm13:indmatch}, based on a translation into a term matching
problem; this algorithm is at the core of BPLTool.  In
\cite{sevegnani2010sat} Sevegnani \emph{et.~al.} presented a SAT based
algorithm for deciding the matching problem in bigraphs with sharing.  To the
best of the authors knowledge, the approach of \cite{sevegnani2010sat}
is the nearest to the solution presented in this paper. However, an
accurate and fair comparison of algorithms for computing bigraph
embeddings/matchings is difficult because of the aforementioned
reasons and because of the different bigraph variants these algorithms
deal with (e.g., the matching algorithm of the model checker BigMC
does not support inner names).

A decentralized algorithm for computing bigraphical embeddings
has been proposed in \cite{mpm:gcm14w,mpm:gcm14j} and is at the core of
the \emph{distributed bigraphical machine}.

\paragraph{Acknowledgements} 
We thank Alessio Mansutti and the participants to MeMo'14 for fruitful
discussions on preliminary version of this paper.  
This work is partially supported by MIUR PRIN project 2010LHT4KM, \emph{CINA}.

\FloatBarrier

{

}

\clearpage
\appendix

\section{NP-completeness of the bigraph embedding problem}\label{sec:npcomp}
In \cite{bmr:tgc14} the authors proved that the labelled forest embedding problem
is NP-Complete. This result covers the case of place graphs embeddings but not link graphs. However, the latter correspond to an admissibility problem
for a specific flow network and hence their results will suffice to justify our approach.

To prove that the labelled forest pattern is NP-complete,
in \cite[§3]{bmr:tgc14} a reduction from \textsc{3-Sat} is provided.
The proposed reduction uses the \textsc{RainbowAntichain} problem
as a middle step. An instance of this problem (introduced in \cite{bmr:tgc14})
is a tree $\mathcal{T(V,E)}$ with nodes $\mathcal V$ and edges
$\mathcal E$, and a finite set of colours $\mathcal P$, said palette.
Some of the nodes in $\mathcal T$ have been coloured with one or more
colours taken from $\mathcal P$. The problem asks to decide whatever
exists a colourful subset of nodes $\mathcal R \subset \mathcal V$ 
where each colour $c$ of $\mathcal P$ has exactly one representative node
coloured with $c$ and for no pair of $u,v \in \mathcal R$ of distinct
nodes $u$ is an ancestor of $v$.

\begin{theorem}[\hspace{-.2ex}{\cite[Th.~8]{bmr:tgc14}}]
\label{th:ra-np}
  The \textsc{RainbowAntichain} problem is NP-complete.
\end{theorem}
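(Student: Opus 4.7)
The plan is to establish both membership in NP and NP-hardness. Membership is immediate: a candidate $\mathcal{R}\subseteq \mathcal{V}$ is a polynomial-size certificate, and one can verify in polynomial time (i) that each colour of $\mathcal{P}$ has exactly one representative in $\mathcal{R}$ and (ii) that $\mathcal{R}$ is an antichain in $\mathcal{T}$ by comparing root-to-node paths. The substance of the proof is the hardness reduction, and I would reduce from \textsc{3-Sat}.

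Given a 3-SAT instance with variables $x_1,\dots,x_n$ and clauses $C_1,\dots,C_m$, I would build a tree $\mathcal{T}$ and palette $\mathcal{P}=\{v_1,\dots,v_n,k_1,\dots,k_m\}$ as follows. The root $R$ has $n$ children $r_1,\dots,r_n$, one per variable; each $r_i$ has exactly two children $t_i$ and $f_i$, both coloured with $v_i$, encoding the two truth values for $x_i$. Under $t_i$ I attach, as direct leaves, one node coloured $k_j$ for every occurrence of the literal $\neg x_i$ in clause $C_j$; symmetrically, under $f_i$ I attach a leaf coloured $k_j$ for every occurrence of $x_i$ in $C_j$. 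The construction has $1 + n + 2n + 3m$ nodes and is computable in polynomial time.

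The key correctness observation is that because $t_i$ and $f_i$ are siblings, a rainbow subset must pick exactly one of them for colour $v_i$, which I interpret as the assignment of $x_i$. If $t_i\in\mathcal{R}$ (reading $x_i=\text{true}$), then no descendant of $t_i$ can join $\mathcal{R}$, excluding precisely the $\neg x_i$ occurrences; the $k_j$-coloured leaves still available are exactly those labelled by literals satisfied by the assignment, and the analogous property holds when $f_i$ is chosen. I would then check both directions: from a satisfying assignment, form $\mathcal{R}$ by adding $t_i$ or $f_i$ per variable and one satisfying literal-leaf per clause, and verify the antichain condition (variable-nodes are siblings, leaves are siblings of leaves, and a chosen leaf under $f_i$ co-exists only with $t_i$, never with $f_i$); conversely, read the assignment off the chosen $v_i$-representatives and observe that each chosen $k_j$-leaf sits under the ``conflicting'' side of its variable, forcing consistency.

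The delicate point I expect to be the main obstacle is precisely this encoding of variable consistency across clauses using only the antichain constraint while staying within a single tree (so each literal occurrence has a unique parent). My placement rule — put every occurrence of a literal in the subtree of the truth value that falsifies it — routes all occurrences of a variable through that variable's unique subtree, preserving the tree property, and makes the antichain requirement do exactly the work of propagating a truth assignment. Any attempt to share literal nodes between clauses, or to place them across variable subtrees, would break either the tree structure or one of the two directions of correctness, so the soundness of the gadget hinges on this specific layout.
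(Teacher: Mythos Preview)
Your proposal is correct and follows essentially the same reduction from \textsc{3-Sat} as the paper: both build a tree whose upper level encodes the two truth values of each variable (sharing a variable colour) and hang each clause-literal occurrence, coloured by its clause colour, under the truth-value node that \emph{falsifies} that literal, so that the antichain constraint propagates a consistent assignment and the rainbow constraint forces every clause to have a surviving (hence satisfied) witness. The only cosmetic difference is your extra layer of uncoloured per-variable nodes $r_i$ between the root and the $t_i/f_i$ pairs, whereas the paper hangs the literal nodes $x_i,\bar x_i$ directly from the root; this does not affect the argument.
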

\begin{proof}
\textsc{RainbowAntichain} is in NP, since, given a set of 
nodes $\mathcal R$, checking whatever $\mathcal R$ us a 
rainbow anti-chain for $\mathcal T$ can be easily done in 
polynomial time by breadth-first visit of $\mathcal T$, 
and for each $v\in \mathcal R$ found, firsti increase the
node counter $nc$, then the colour counter $p[i]$ (with
$1\leq i\leq |\mathcal P|$) if $v$ has a colour $c_i \in 
\mathcal P$. The check fails whether $nc > |\mathcal P|$
or $p[j] = 0$ for some $j$, otherwise $\mathcal R$ is a
rainbow anti-chain for $\mathcal T$.

Let $C = \{c_1,\dots c_m\}$ be an instance of \textsc{3-Sat}
on variables $\{x_1,\dots,x_n\}$. From $C$ we define a 
coloured tree $\mathcal T$ as follows. Let $r$ be the root
node which is left uncoloured. For each variable $x_i$ let
$x_i$ and $\bar x_i$ be child nodes of $r$, and color them 
with fresh colour $c_{x_i}$, distinct for each variable.
For each clause $c_j\in C$. let $c_j^1$, $c_j^2$ and $c_j^3$
be children nodes of $l_i$ in $\mathcal T$ if $c_j$ contains
$l_i$ as negated, and assign to each of them a fresh colour
$c_{c_j}$, distinct for each clause. An example of 
construction for $c_1 = (\bar x_1\lor x_2\lor\bar x_3), 
c_2 = (x_1\lor x_2\lor x_3)$ is shown below.
\[
  \begin{tikzpicture}[font=\small,yscale=.7]
    \tikzstyle{every node}=[circle,draw]
    \node {\(r\)}
      child {node[fill=red!50]      {\(x_1\)}
        child {node[fill=yellow!50] {\(c_1^1\)}}}
      child {node[fill=red!50]      {\(\bar x_1\)}
        child {node[fill=green!50]  {\(c_2^1\)}}}
      child {node[fill=yellow!50]   {\(x_2\)}}
      child {node[fill=yellow!50]   {\(\bar x_2\)}
        [sibling distance=1cm]
        child {node[fill=yellow!50] {\(c_1^2\)}}
        child {node[fill=green!50]  {\(c_2^2\)}}}
      child {node[fill=blue!50]     {\(x_3\)}
        child {node[fill=yellow!50] {\(c_1^3\)}}}
      child {node[fill=blue!50]     {\(\bar x_3\)}
        child {node[fill=green!50]  {\(c_2^3\)}}}
    ;
  \end{tikzpicture}
\]
Let $\varphi$ be a truth assignment satisfying the formula
$C$. By construction, selecting only literal nodes $l_i$
which are satisfied by $\varphi$, we obtain a rainbow 
anti-chain $\mathcal R'$ in $\mathcal T$ for the palette
$\{c_{x_i} \mid 1 \leq i \leq n\}$. Now, we extend 
$\mathcal R'$ to $\mathcal R$ adding all clause nodes 
which are not children of an element in $\mathcal R'$.
Such $\mathcal R$ is clearly an anti-chain for $\mathcal
T$, but we must ensure that is colourful and no more than 
one representative per colour is taken. To do this, it
suffices to prove that $\mathcal R$ is colourful, indeed
if a colour occurs more than one in $\mathcal R$ we remove
the others. By hypothesis, each clause $c_J$ is satisfied 
by $\varphi$, hence$c_J$ has at least one literal $l_i$
such that $\varphi(l_i)$ is true. By construction of
$\mathcal T$, there exist a node $c_j^k$, with $1\leq k
\leq 3$, child of $\bar l_i$, hence already in $\mathcal R$.
This holds for all clauses $c_j$, hence $\mathcal R$ is 
colourful.

Controversy, let $\mathcal R$ be a rainbow anti-chain for 
$\mathcal T$. Let the boolean function $\varphi$ over
$\{x_1,\dots,x_n\}$ be defined by $\varphi(x_i) = \mathbb T$
if $x_i$ is a node in $\mathcal R$ and $\varphi(x_i) = 
\mathbb F$ otherwise. Since $\mathcal  R$ has exactly one
representative per colour, no opposite literals are in 
$\mathcal  R$, hence $\varphi$ is a truth assignment for
$C$. By colourfulness of $\mathcal R$, for all colours $c_{c_J}$
$(1\leq j\leq m)$ there exists a node $c_j^k \in \mathcal R$
$(1\leq k\leq 3)$ such that $c_j^k$ has a colour $c_{c_j}$.
By construction on $\mathcal T$, each $c_j^K\in\mathcal R$
is a children of a literal node $l_i\notin \mathcal R$, and
moreover the clause $c_j$ contains $\bar l_j$. Since $l_j\notin
\mathcal R$, by definition $\varphi(\bar l_i) = \mathbb T$,
hence $\varphi(c_j) = \mathbb T$. This holds for all $1\leq j
\leq m$, hence $\varphi$ satisfies $C$.
\end{proof}

In \cite{bmr:tgc14} labelled trees are described by terms of a
language inspired to the ambient calculus and quotiented by 
the usual structural congruence. Then ambients corresponds
to labelled subtrees; the null process to the empty tree;
variables are leaves; parallel processes to siblings 
(with the additional requirement for disjoint variables
to ensure a tree structure). This allows to represent
grafting as (simultaneous) substitution.
An embedding of a labelled forest (with variables $\vv Z$) 
$\vv S(\vv Z)$ into a tree $T$ (denoted as $T \succeq \vv S$) 
can be described by a tree $C(\vv X)$, said context, and a 
forest $\vv D$, whose trees are called parameters, such that 
$T\equiv (C\{\vv S\slash \vv X\})\{\vv D\slash \vv Z\}$. 

It is the straightforward to see that an instance $\mathcal T$, 
$\mathcal P = (c_0,\dots,c_{n-1})$ of \textsc{RainbowAntichain}
can be reduced to a forest pattern matching, namely, one that
embeds the forest $(c_0[0],\dots,c_{n-1}[n-1])$ -- every tree has
 only a node, labelled with a colour of the palette, and a hole/site
-- into $\mathcal T$. This states that the forest pattern matching 
problem is NP-complete. Formally,
\begin{theorem}[\hspace{-.2ex}{\cite[Th.~9]{bmr:tgc14}}]
\label{th:fpm-np}
  The labelled forest embedding problem is NP-complete.
\end{theorem}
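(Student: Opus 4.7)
The plan is to establish both membership in NP and NP-hardness via reduction from \textsc{RainbowAntichain}, which is NP-complete by Theorem~\ref{th:ra-np}.

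For membership in NP, a certificate for $T \succeq \vec S(\vec Z)$ is a context $C(\vec X)$ together with a list of parameters $\vec D$; both have size bounded by $|T|$ since their nodes come from a partition of $T$'s nodes. Verifying $T \equiv (C\{\vec S \slash \vec X\})\{\vec D \slash \vec Z\}$ reduces to checking structural congruence of labelled trees up to sibling permutation, which can be done in polynomial time (e.g.~by canonicalising sibling orderings bottom-up).

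For the reduction, given an instance $(\mathcal T, \mathcal P = \{c_0, \ldots, c_{n-1}\})$ of \textsc{RainbowAntichain}, I construct in polynomial time the guest forest
\[
\vec S(\vec Z) \defeq (c_0[Z_0], c_1[Z_1], \ldots, c_{n-1}[Z_{n-1}]),
\]
consisting of $n$ single-node trees, each labelled with a distinct palette colour and carrying a distinct hole variable, and take $\mathcal T$ itself as the host. I then claim $\mathcal T \succeq \vec S$ iff $(\mathcal T, \mathcal P)$ admits a rainbow antichain. In the forward direction, any embedding places each tree $c_i[Z_i]$ at some $c_i$-coloured node $v_i$ of $\mathcal T$; the nodes $\mathcal R = \{v_0, \ldots, v_{n-1}\}$ realise every colour (one per index), and since each $v_i$ together with all of its descendants is absorbed either into the guest image at $v_i$ or into the parameter $D_i$ filling $Z_i$, no $v_i$ can be a proper ancestor of any $v_j$ (otherwise the context would require sharing a node between two disjoint parameters), so $\mathcal R$ is an antichain. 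Conversely, from a rainbow antichain $\mathcal R = \{v_0, \ldots, v_{n-1}\}$ (with $v_i$ coloured $c_i$) I build $C$ by replacing each $v_i$ in $\mathcal T$ by the hole $X_i$, and set $D_i$ to be the subtree rooted at $v_i$ with its root removed; the antichain condition guarantees that these subtrees are pairwise disjoint and that the context $C$ is well-formed, so $\mathcal T \equiv (C\{\vec S \slash \vec X\})\{\vec D \slash \vec Z\}$.

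The main obstacle I expect is the backward direction, where I must use the term-based semantics of embeddings (contexts plus parameters, glued by substitution) to extract a genuine antichain from what is a priori an arbitrary decomposition. The key observation that makes this work is that the parameters $\vec D$ occupy pairwise disjoint positions in $C$, so the host subtrees to which the guest trees $c_i[Z_i]$ are mapped cannot nest inside one another; this is exactly the antichain property. The rest of the argument is routine bookkeeping about how substitution recombines $C$, $\vec S$ and $\vec D$ into the original host tree.
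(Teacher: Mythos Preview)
Your proposal is correct and follows essentially the same route as the paper: NP membership via polynomial-time tree isomorphism/canonicalisation, and NP-hardness via the very reduction from \textsc{RainbowAntichain} you describe, with the guest forest $(c_0[Z_0],\dots,c_{n-1}[Z_{n-1}])$ and the host obtained from $\mathcal T$. Two small polish points: the paper explicitly labels uncoloured nodes of $\mathcal T$ with a fresh symbol $*$ when turning it into a labelled tree term (you should do likewise rather than ``take $\mathcal T$ itself''), and in your final paragraph you write ``backward direction'' where you clearly mean the direction that extracts an antichain from a given decomposition, i.e.\ what you earlier called forward.
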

\begin{proof}
Given a solution $(C, \vv D)$ for $T \succeq \vv S$, 
checking that $T \equiv (C\{\vv S\slash\vv X\})
\{\vv D\slash\vv Z\}$ corresponds to a tree isomorphism
test, which is in P for \cite{ht:71,hw:74}.
Let a coloured tree $\mathcal T$ and a palette $\mathcal P 
= \{c_1,\dots, c_n\}$ be and instance of the \textsc{
RainbowAntichain} problem. Let us transform $\mathcal T$
into a tree term $T$ as follows. If $T$ is a single node 
$v$ (a leaf) $T$ is the empty tree labelled with $m$ ($m[
\mathsf nil]$) where $m = c$ if $v$ has color $c$, 
otherwise $m = *$, a fresh name not in $\mathcal P$ 
denoting an uncoloured node. If $\mathcal T$ has root $r$ 
and $\mathcal T_1,\dots,\mathcal T_k$ are the (children) 
subtrees of $r$, $T$ is $m[T_1|\dots | T_k ]$, where $m$ 
is as above for $r$, and $T_1,\dots ,T_k$ are transformed
trees of $\mathcal T_1,\dots,\mathcal T_k$.
Suppose $(C, \vv D)$ be a solution for $T \succeq \vv S = 
(c_1[x_1],\dots,c_k[x_n])$. In $C$, each $c_i[x_i]$ is 
grafted into a variable $z_i \in vars(C)$. Since variables
can appear in terms only as leaves, in the transformation 
$\mathcal T$ of $T$, we have found a rainbow anti-chain for 
$\mathcal P$, since the matching forest $\vv S$ has all 
the colors in $\mathcal P$ exactly once. 

Assume that $\mathcal T$ has a rainbow anti-chain $\mathcal R$.
In order to recover context $C$ and parameters $\vv D$, 
which are a solution for $T\succeq (c_1[x_1],\dots,c_k[x_n])$, 
it suffices to apply the construction explained above with 
some adjustments: we obtain $C$ applying the transformation 
from the root of $T$, but if a node in $\mathcal R$ is reached
it is transformed by a fresh variable $z_i$ $(1 \leq i \leq n)$
one for each element in $\mathcal R$; $D_j$’s are recovered 
applying the original transformation starting from the 
subtrees rooted at the children of nodes in $\mathcal R$. 
It is straightforward to prove that $T\equiv(C\{c_1[x_1]\slash 
z_1,\dots, c_k[x_n]\slash z_n\})\{\vv D\slash\vv X\}$, for 
$X = \{x_1,\dots,x_n\}$.
\end{proof}

This proves that deciding the existence of a place graph embedding
(which can be seen as labelled forest pattern matching) of a given
redex into an agent is NP-complete. Moreover, we are interested 
in listing all of them thus making CSP a viable approach.

\end{document}